\let\myPushQED=\pushQED
\let\myPopQED=\popQED
\newcommand{\myignore}[1]{}
\newenvironment{proof*}
  {\let\pushQED=\myignore\begin{proof}\let\pushQED=\myPushQED}
  {\def\popQED{}\end{proof}\let\popQED=\myPopQED}
\newenvironment{description*}%
  {\vspace{-1ex}\begin{description}%
    \setlength{\itemsep}{-0.5ex}%
    \setlength{\parsep}{0pt}}%
  {\end{description}}
\newenvironment{itemize*}%
  {\vspace{-1ex}\begin{itemize}%
    \setlength{\itemsep}{-0.5ex}%
    \setlength{\parsep}{0pt}}%
  {\end{itemize}}
\newenvironment{enumerate*}%
  {\vspace{-1ex}\begin{enumerate}%
    \setlength{\itemsep}{-0.5ex}%
    \setlength{\parsep}{0pt}}%
  {\end{enumerate}}
 \gdef\xxxmark{%
   \expandafter\ifx\csname @mpargs\endcsname\relax 
     \expandafter\ifx\csname @captype\endcsname\relax 
       \marginpar{xxx}
     \else
       xxx 
     \fi
   \else
     xxx 
   \fi}
 \gdef\xxx{\@ifnextchar[\xxx@lab\xxx@nolab}
 \long\gdef\xxx@lab[#1]#2{{\bf [\xxxmark #2 ---{\sc #1}]}}
 \long\gdef\xxx@nolab#1{{\bf [\xxxmark #1]}}
\newtheorem{theorem}{Theorem}
\newtheorem{lemma}[theorem]{Lemma}
\newtheorem{claim}[theorem]{Claim}
\newcommand{\eps}{\varepsilon}
\newcommand{\twodots}{\mathinner{\ldotp\ldotp}}
\newcommand{\proc}[1]{\textnormal{\scshape#1}}
\newcommand{\HH}{\mathrm{H}}
\newcommand{\E}{\mathbf{E}}
\let\phi=\varphi
\newcommand{\calC}{\chi}
\newcommand{\calD}{\mathcal{D}}
\newcommand{\tunion}{t_{\proc{Union}}}
\newcommand{\tfind}{t_{\proc{Find}}}
\newcommand{\tlink}{t_{\proc{Link}}}
\title{Don't Rush into a Union: Take Time to Find Your Roots} 
\author{
     Mihai P\v{a}tra\c{s}cu \\ AT\&T Labs
\and Mikkel Thorup \\ AT\&T Labs
}
\date{}
\begin{document}

\maketitle

\begin{abstract}
We present a new threshold phenomenon in data structure lower bounds
where slightly reduced update times lead to exploding query
times. Consider incremental connectivity, letting $t_u$ be the time to
insert an edge and $t_q$ be the query time. For $t_u = \Omega(t_q)$,
the problem is equivalent to the well-understood \emph{union--find}
problem: $\proc{InsertEdge}(s,t)$ can be implemented by
$\proc{Union}(\proc{Find}(s), \proc{Find}(t))$. This gives worst-case
time $t_u = t_q = O(\lg n / \lg\lg n)$ and amortized $t_u = t_q =
O(\alpha(n))$.

By contrast, we show that if $t_u = o(\lg n / \lg\lg n)$, the query
time explodes to $t_q \ge n^{1-o(1)}$. In other words, if the data
structure doesn't have time to find the roots of each disjoint set
(tree) during edge insertion, there is no effective way to organize
the information!

For amortized complexity, we demonstrate a new inverse-Ackermann type
trade-off in the regime $t_u = o(t_q)$.

A similar lower bound is given for fully dynamic connectivity, where
an update time of $o(\lg n)$ forces the query time to be
$n^{1-o(1)}$. This lower bound allows for amortization and Las Vegas
randomization, and comes close to the known $O(\lg n \cdot (\lg\lg
n)^{O(1)})$ upper bound.
\end{abstract}

\thispagestyle{empty}
\setcounter{page}{0}
\clearpage

\section{Introduction}
We present a new threshold phenomenon in data structure lower bounds
where slightly reduced update times lead to exploding query
times. Previous trade-offs where smooth and much weaker.
The new explosive lower bounds are found hidden in some very
well-studied problems: incremental and fully-dynamic connectivity.

\subsection{Our Results}

The union--find problem is to support the following operations on a
collection of disjoint sets, starting from $n$ singleton sets $\{1\},
\dots, \{n\}$:
\begin{description*}
\item[$\proc{Find}(v):$] Return an element in the same set as $v$ that
  uniquely identifies the set. (This is called the root, or the
  representative of the set.)

\item[$\proc{Union}(u,v):$] Join the sets identified by $u$ and $v$,
  \emph{assuming} these are roots of their own sets.
\end{description*}

The terminology for this problem stems from the usual implementation
as a forest, in which each tree represents a set. \proc{Find} involves
walking to the root of $v$'s tree, potentially doing useful work (path
compression). \proc{Union} simply involves adding an edge between the
roots (whose direction is usually determined by the size of each
subtree, cf.~union by rank).

The union--find problem has been studied into excruciating detail and
is now essentially understood. From an amortized perspective,
Tarjan~\cite{tarjan75uf} showed that a sequence of $n-1$ unions and
$m$ finds can be supported in time $O(n + m
\alpha(m,n))$. See~\cite{tarjan84uf, lapoutre90uf} for different
analyses and trade-offs between amortized running times. From a
worst-case perspective, the classic union-by-rank gives
union in constant time and find in $O(\log n)$ time. Trade-offs were addressed by
Blum~\cite{blum86uf}, with an improvement by Smid~\cite{smid90uf}.
They show that, if the time for union is bounded by
$t_{\proc{Union}}$, \proc{Find} can be supported in worst-case $O(\lg
n / \lg t_{\proc{Union}})$. Finally, Alstrup et al.~\cite{alstrup99uf}
showed that the amortized and worst-case trade-offs can be achieved
\emph{simultaneously}. These bounds are known to be optimal in the
powerful cell-probe model (see below for a review of the lower
bounds).

Here we consider an obvious strengthening of the problem, where we
allow:
\begin{description*}
\item[$\proc{Link}(u,v):$] Join the sets containing $u$ and $v$ if these
  sets are different.
\end{description*}

The link--find problem is a natural way to solve one of the most
basic graph problems: \emph{incremental connectivity}. This is
the problem of maintaining an undirected graph under edge insertions
and connectivity queries. New edges may link arbitrary nodes, possibly
introducing cycles. Two nodes are connected if they find the same
identifier.

We now consider the worst-case trade-offs for link-find and incremental
connectivity. Since link-find solves incremental connectivity, we want
upper-bounds are for link-find and lower-bounds for incremental 
connectivity.

Let $\tlink$ be the link time and $\tfind$ be the find time. 
In the regime $\tlink \ge \tfind$, the problem can be
solved by union--find, since we have time to find the roots of $u$ and
$v$ and union them if they are different. Using the upper bounds for
union--find, we obtain $\tfind = O(\lg n / \lg t_u)$, and in particular
the balanced point $\tlink = \tfind = O(\lg n / \lg\lg n)$.  

If we insist on $\tlink = o(\tfind)$, union--find no longer suffices. In
fact, we show a surprisingly abrupt trade-off that essentially
signifies the ``end of data structuring'' even for incremental
connectivity:
\begin{theorem}   \label{thm:inc}
Any data structure for incremental connectivity over $n$ vertices that
supports edge insertions in worst-case time $\tlink = o(\frac{\lg
  n}{\lg\lg n})$ must have worst-case query time $\tfind \ge n^{1-o(1)}$
in the cell-probe model with cells of $O(\lg n)$ bits.
\end{theorem}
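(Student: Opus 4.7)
The plan is to prove this polynomial cell-probe lower bound via a reduction from a hard two-party communication problem, most naturally an asymmetric disjointness-style problem (lopsided set disjointness or a close variant). Classical chronogram and cell-sampling techniques yield only polylogarithmic lower bounds, and the polynomial target $n^{1-o(1)}$ really requires the lopsided framework that has proved so effective for dynamic problems with asymmetric update/query costs. The intuition driving the construction is exactly the one highlighted by the paper's title: when updates run in time $o(\lg n/\lg\lg n)$, the data structure does not have enough probes to walk to the root of a balanced representation, and an adversary can exploit this slack to force queries to traverse a long effective chain.

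\emph{Step 1: a hierarchical hard distribution.} I would build an update sequence on $n$ vertices in which the adversary incrementally constructs a balanced forest of branching factor $B=(\lg n)^{\Theta(1)}$ and depth $L=\Theta(\lg n/\lg B)=\Theta(\lg n/\lg\lg n)$. Edges inserted at ``level $\ell$'' merge components of size $B^{\ell-1}$ into components of size $B^\ell$, in a pattern chosen from the hard communication distribution (Alice's sets $S_1,\ldots,S_N$). The final forest encodes enough bits of Alice's input that any particular query vertex held by Bob can be translated into a single connectivity query whose YES/NO answer decodes membership of his query in $\bigcup_i S_i$.

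\emph{Step 2: cell-probe to communication simulation.} Any data structure with worst-case times $t_u,t_q$ on $w=O(\lg n)$-bit cells yields a protocol: Alice performs all $O(n)$ updates to produce the memory state, and Bob simulates the query, requesting from Alice the contents of the $t_q$ cells it probes. With the standard address-and-reply simulation the total communication is $O(n\cdot t_u\cdot w)$ bits from Alice and $O(t_q\cdot w)$ bits from Bob. Plugging $t_u=o(\lg n/\lg\lg n)$ into the asymmetric lower bound leaves Alice's side short of the threshold needed to solve the communication problem on its own, forcing Bob's side to supply $\Omega(n^{1-o(1)})$ bits and hence $t_q\ge n^{1-o(1)}$.

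\emph{Main obstacle.} The crux is the graph construction of Step 1. Each connectivity query must unambiguously decode a nontrivial piece of Alice's input, yet the updates must be ``unstructured'' enough that a data structure with $t_u=o(\lg n/\lg\lg n)$ cannot precompute the answers during the update phase. The threshold $\lg n/\lg\lg n$ should fall out because it is exactly the Blum--Smid complexity of pointer-jumping to the root of a $B$-ary tree of depth $L$ in this cell-probe model; any strictly smaller update budget prevents the data structure from completing even a single root walk per update, defeating the union--find strategy and leaving the adversary in control of the forest shape. Formalizing this---in particular, arguing that the hierarchy cannot be ``short-circuited'' by a clever global indexing structure, so that each of the $L$ levels contributes an independent chunk of hardness---will be the main technical effort.
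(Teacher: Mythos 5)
Your high-level framework is on the right track --- chronogram/epoch decomposition over a $B$-ary forest of depth $\Theta(\lg n/\lg\lg n)$, translated into a two-party communication game --- but there is a genuine gap at the heart of Step~1 that the paper spends most of its effort overcoming, and which your proposal doesn't address.

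The gap is this: you propose that ``any particular query vertex held by Bob can be translated into a single connectivity query whose YES/NO answer decodes membership of his query in $\bigcup_i S_i$.'' A connectivity query returns one bit. For a lower bound of the form $t_q \ge n^{1-o(1)}$ you need Bob's simulation of a single ``query'' to force reading $\approx n^{1-\eps}$ cells from Alice's epoch, and a one-bit output can never carry enough entropy to justify that. The paper explicitly flags this obstacle (quoting Kaplan--Shafrir--Tarjan: ``the tricky part is that the query output is a single bit'') and resolves it with what it calls a \emph{metaquery}, which is exactly the place where the power of $\proc{Link}$ (as opposed to $\proc{Union}$) enters. The construction introduces $C=n^\eps$ auxiliary ``colored'' vertices attached to the $M=n^{1-\eps}$ roots in a fixed pattern, then picks $M$ random leaves, links each to a \emph{proposed} colored vertex (each link costs only $\tlink$, which is cheap by assumption), and finally performs $C-1$ connectivity checks between the colored vertices while merging them. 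This single metaquery uses $O(M)$ links plus only $C$ Boolean queries, yet its truth value certifies a coloring of the $M$ leaves that carries $\Theta(M\lg C)$ bits of entropy. In other words, the cheap links are doing the information-carrying, and the expensive queries merely verify. Without this batching device you simply cannot make the entropy bookkeeping reach $n^{1-o(1)}$; the best you could argue for is the standard $\Omega(\lg n/\lg\lg n)$ bound already known for union--find.

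A secondary but also important issue is your Step~2 simulation. You propose the trivial simulation where Alice sends the whole post-update memory snapshot ($O(n t_u w)$ bits) or Bob does raw address-and-reply ($O(t_q w)$ bits). This is far too lossy: Alice's ``side'' alone already swamps any entropy target. The paper instead proves a Bloom-filter simulation lemma (their Lemma~\ref{lem:bloom}): Alice sends a Bloom filter of her written cells (costing only $|W_A|\cdot O(\lg\frac1p)$ bits rather than $|W_A|\cdot O(w)$), Bob queries Alice only for cells the filter flags, and the resulting complexity is $\E[\,|W_A|\cdot O(\lg\frac1p) + O(w)(|W_A\cap R_B| + p|R_B|)\,]$. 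Choosing $p \approx 1/\lg n$ makes the $|W_A|$ term cheap ($O(\lg\lg n)$ per cell, which is exactly where the $\lg\lg n$ in the threshold comes from), so the dominant term becomes $|W_A\cap R_B|\cdot O(\lg n)$, which is what you want to lower-bound. Comparing against the entropy bound then isolates $|W_A\cap R_B|$. Finally, you would also need the amortization trick the paper uses: a single metaquery does not touch enough of epoch~$i$, so they take $B^i$ i.i.d.\ metaqueries and argue by a coupon-collector / Markov step that their images cover $\Omega(MB^i)$ level-$i$ nodes, so the combined coloring has entropy $\Omega(MB^i\lg C)$, matching Alice's epoch size. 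Your proposal's per-level ``independent chunk of hardness'' hand-wave is pointing in the right direction, but this coupling between the number of metaqueries and the epoch size $MB^i$ is the concrete mechanism that makes it work.
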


By reduction, we get the same trade-off for link--find.
This can be contrasted with the very smooth trade-off for union--find,
$t_{\proc{Find}} = O(\lg n / \lg t_{\proc{Union}})$, e.g., 
the standard union-by-rank with $O(1)$-time \proc{Union} and 
$O(\lg n)$-time \proc{Find}.
Our result shows a remarkable dependence of edge insertion on the
\proc{Find} operation. As soon as a new link doesn't have time
to locate the roots, the query degenerates
into almost linear time. 

We will also analyze the amortized bounds for link--find, which
are again weaker than those for union--find when $t_q\geq t_u$,
but the bounds are less striking.

We show a similar computational phenomenon for fully dynamic
connectivity where both edge insertions and deletions. In this
fully-dynamic case, we hit the wall even when we amortize.
\begin{theorem}   \label{thm:full}
Any data structure for fully dynamic connectivity in a graph of $n$
vertices with update time $t_u = o(\lg n)$ must have query time $t_q
\ge n^{1-o(1)}$. This bound allows amortization and Las Vegas
randomization (expected running times), and holds in the cell-probe
model with cells of $O(\lg n)$ bits.
\end{theorem}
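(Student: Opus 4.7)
The plan is to prove Theorem~\ref{thm:full} by a cell-probe reduction from a hard communication game, using an adversarial hard instance tailored to force the sharp jump $t_u = o(\lg n) \Rightarrow t_q \geq n^{1-o(1)}$. The instance I would use is a layered graph on $\Theta(n)$ vertices with a designated top column $\Ctop$ and bottom column $\Cbot$ connected through $L = \Theta(\lg n)$ bipartite matching layers; the allowed updates are insertions and deletions of individual candidate edges within each layer. Under a uniformly random choice of matchings, encoded by an initial batch of $N = \Theta(n \lg n)$ updates, the composed permutation from $\Ctop$ to $\Cbot$ has entropy $\Omega(n \lg n)$, and a connectivity query on a specified pair in $\Ctop \times \Cbot$ effectively traces a single path that must cross all $L$ layers.

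The second step is a Miltersen-style simulation: Alice performs the updates (knowing the secret matchings) while Bob performs the queries (knowing only the query pair). Cells written by Alice and subsequently read by Bob count as communication. Alice's total writes amount to at most $N t_u \lg n$ bits, and each query contributes at most $t_q \lg n$ bits on Bob's side. Combining this with a reconstruction argument---that a short batch of $Q = n^{o(1)}$ adaptively chosen queries should suffice to reveal a large fraction of the secret matchings---I would derive an inequality of the form $N t_u + Q t_q \gtrsim n^{1+o(1)}$ under the right distribution. When $t_u = o(\lg n)$, the first term is too small to carry the entropy, so we must have $Q t_q = n^{1+o(1)}$, giving $t_q \geq n^{1-o(1)}$. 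For Las Vegas randomized and amortized algorithms, I would fix the hard distribution, apply Yao's minimax, and work with expectations throughout, exploiting the symmetry of uniformly random matchings so that most query pairs on most inputs are informative.

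The main obstacle will be the sharp near-linear lower bound on $t_q$: the standard chronogram or epoch arguments used for dynamic connectivity yield only $t_u \cdot t_q = \Omega(\lg n)$-type trade-offs, which are orders of magnitude weaker than what is required here. Overcoming this requires exploiting the depth-$L$ layered structure of the hard instance in such a way that any update budget below $\Theta(\lg n)$ is simply too small to propagate a useful cross-layer summary, so a query has to perform a full depth-$L$ search essentially from scratch. Two places where I anticipate real difficulty are (i) designing the reconstruction protocol so that $n^{o(1)}$ queries already pin down the matching layers up to low ambiguity, and (ii) making the accounting tight so that the threshold sits precisely at $\Theta(\lg n)$ rather than at some constant factor above or below it. Once these two pieces are in place, the translation back to worst-case cell-probe costs should follow the template already established for the incremental case of Theorem~\ref{thm:inc}.
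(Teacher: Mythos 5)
Your plan breaks at exactly the place you flag as ``difficulty (i),'' and it is not a technicality to be tightened but an information-theoretic wall. A fully-dynamic connectivity query on a single pair of vertices returns one bit. A batch of $Q = n^{o(1)}$ such queries therefore reveals at most $n^{o(1)}$ bits about Alice's hidden matchings, which carry $\Theta(n\lg n)$ bits of entropy. No reconstruction protocol in which the decoder learns the matchings from query \emph{answers} can close that gap, so the inequality $N t_u + Q t_q \gtrsim n^{1+o(1)}$ cannot be extracted this way. What the paper does instead is build a ``metaquery'' gadget that you are missing: a column of $M = n^{1-\eps}$ vertices is matched to $C = n^{\eps}$ colored hub vertices by a proposed balanced coloring $\chi\in[C]^M$, $M$ temporary edges are inserted, and then only $C-1$ connectivity queries between consecutive hubs verify whether $\chi$ is consistent. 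Under the hard distribution the answer is always ``true,'' so the answer bit carries no information; the lower bound comes from the communication game, where Bob's input is the high-entropy coloring itself and the zero-error rectangle argument forces the transcript to encode it. The metaquery costs $O(M t_u + C t_q)$ cell probes, so once $t_u = o(\lg n)$ the $M t_u$ term falls below the $\Omega(M\lg C)=\Omega(\eps M\lg n)$ entropy floor and $C t_q$ must make up the difference, yielding $t_q\gtrsim M/C = n^{1-2\eps}$. This is the mechanism that makes the threshold abrupt, and it has no counterpart in your outline.

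Two further deviations are also load-bearing. First, the paper does not run a depth-$\Theta(\lg n)$ layered argument in which a query ``traces a path across all layers''; it uses $n^{\eps}$ columns, replaces one inter-column permutation at a time, orders the updates by the bit-reversal permutation, and applies the P\v{a}tra\c{s}cu--Demaine tree-over-time decomposition. The perfect interleaving from bit reversal is what makes Bob's colorings at the query positions independent given Alice's updates, supplying the $\Omega(kM\lg C)$ entropy bound at each internal node of the time tree; a plain epoch/chronogram split would not give this. Second, the simulation you invoke (Miltersen-style, essentially Lemma~\ref{lem:bloom}'s Bloom-filter protocol) pays $O(\lg(1/p))$ bits per cell written by Alice, which with the optimal $p\approx 1/\lg n$ leaves a residual $\lg\lg n$ factor and can only reach the $o(\lg n/\lg\lg n)$ threshold of Theorem~\ref{thm:inc}. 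To push the threshold to $o(\lg n)$, Theorem~\ref{thm:full} switches to the \emph{nondeterministic} protocol of Lemma~\ref{lem:bloomier}, where a prover ships a retrieval dictionary (Bloomier filter) over $W_A\,\Delta\,R_B$ at $O(1)$ bits per cell, spending $O(\lg n)$ bits only on the cells in $W_A\cap R_B$ that it names explicitly. Without that sharper simulation the extra $\lg\lg n$ factor erases the advantage, and the accounting in your item (ii) cannot be made to land at $\Theta(\lg n)$.
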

Thorup~\cite{thorup00connect} has an almost matching upper bound of
$t_u = O(\lg n \cdot (\lg\lg n)^3)$ and $t_q = o(\lg n)$. This data
structure uses both Las Vegas randomization and amortization.

\paragraph{Supporting both $\proc{link}$ and $\proc{union}$.}
To fully appreciate the difficulty of finding roots, consider a data
structure that supports both a traditional $\proc{union}$ update
between roots and $\proc{link}$ between arbitrary nodes. We know from
previous works that if $\proc{union}$ takes $\tunion$ time, then the best
worst-case query time is $\Theta(\log n/\log \tunion)$.  This holds
both for find-root and connecitivity queries.  We can always implement
links with find-root and union in $O(\log n/\log \tunion ~+~ \tunion)$
time, and this preserves the query time. However, in the case where
the query time dominates the union time, that is, $\tunion =o(\log
n/\log \tunion)$, we would like to avoid finding the roots, and get a
query time closer to $\tunion$.

A similar phenomenon appeared in connection with union-find
with deletions. Kaplan et al.~\cite{kaplan02union} considered this problem but
wanted bounds where $n$ represented the size of the actual tree(s)
involved in an operation. 
All worst-case bounds are trivially
local, and \cite{kaplan02union} proved refined the standard amortized
analysis to work locally, though the bound becomes a bit
weird with the standard notation: $\alpha(n)$ is OK, but otherwise,
it becomes $\alpha(n\cdot \lceil M/N\rceil,n)$ amortized time
per find where $M$ and $N$ are the global
number of finds and unions, respectively. With the
notation from \cite{alstrup05union}, the local amortized find bound is 
$O(\alpha_{\lceil M/N\rceil}(n))$. 
They showed how to augment union-find with a
delete operation if we when deleting an element $x$, first find the
root and then perform a local rebuilding step in the tree that $x$ is
deleted from. For $t_u=O(1)$, this gave them both find-root and delete
in $O(\log n)$ time. Similar to our case, they asked if the deletion
time could be made better than this find time. For the deletions,
the answer was yes. Alstrup et al. \cite{alstrup05union} 
proved that deletions could be supported locally in constant time 
without affecting the $O(\log n)$ bound on the query time.

Back to our union-find with links problem, as in the deletions case,
we would like to support $\proc{link}$ better than $\proc{find}$
without affecting the $\proc{find}$ bound. Here we show that any such
positive result is totally impossible. If we try to beat the query
time, supporting links in $\tlink = o(\log n/\log \tunion)$ time, then
the query time explodes from $\tfind = O(\log n/\log t_u)$ to $\tfind
\ge n^{1-o(1)}$ time.

\subsection{Lower Bounds}

Many of the early lower bounds for union--find were in (restricted
versions of) the pointer machine model \cite{tarjan79uf,
  banachowski80uf, lapoutre96uf, blum86uf}.

In STOC'89, Fredman and Saks \cite{fredman89cellprobe} were the first
to show dynamic lower bounds in the cell-probe model. They studied the
partial sums problem and the union--find problem. The partial sums
problem asks to maintain an array $A[1\twodots n]$ under pointwise
updates and queries for a prefix sum: $\sum_{i \le k} A[i]$.  For
partial sums and for worst-case union--find, Fredman and Saks showed a
lower bound of $t_q = \Omega(\lg n / \lg(t_u \lg n))$. For amortized
union--find, they gave an optimal inverse-Ackermann lower bound. A
different proof of the same bounds was given by Ben-Amram and Galil in
FOCS'91~\cite{benamram01sums}.

In STOC'99, Alstrup, Ben-Amram and Rauhe~\cite{alstrup99uf} improved
the trade-off for union--find to $t_q = \Omega(\lg n / \lg t_u)$,
which was also the highest known trade-off for any problem. In
STOC'02, Kaplan, Shafrir and Tarjan \cite{kaplan05meldable} showed that
the optimal worst-case and amortized trade-offs for union--find also
hold for a weaker Boolean version where the user specifies set identifiers
and where we only have membership queries. From a lower bound perspective,
the tricky part is that the query output is a single bit.
Identifiers can always
be viewed as special elements of sets. Thus they get 
the same lower bound trade-off for incremental connectivity: 
edges are only added between current set identifiers, and connectivity queries are
between arbitrary nodes and current set identifiers. This lower-bound
trade-off for incremental connectivity is tight when $t_u=\Omega(t_q)$,
matching the previously mentioned upper-bounds for link--find. However, by
our Theorem \ref{thm:inc}, the incremental connectivity queries hit a wall 
when the update time becomes lower.

The work of P\v{a}tra\c{s}cu and Demaine from STOC'04
\cite{patrascu06loglb} gives the best trade-offs known today, for any
explicit problem. They considered partial sums and fully dynamic
connectivity, and showed that, if $\max \{ t_u, t_q \} = O(B \lg n)$,
then $\min \{ t_u, t_q \} = \Omega(\log_B n)$. In particular, their
bounds implied $\max \{ t_u, t_q \} = \Omega(\lg n)$, whereas previous
results implied $\max \{ t_u, t_q \} = \Omega(\lg n / \lg\lg n)$.

These bounds are easily seen to be optimal for the partial sums
problem. The standard solution is to create an ordered binary tree
with leaf set $[n]$; each internal node maintains the sum of its
children. Updates and queries are trivially supported in $\Theta(\log
n)$ time. To get a trade-offs, we can instead use a $B$-tree with
degree $B$. The time of an update is the height of the tree, which is
$O(\log_B n)$. However, to answer a query, we need to add up all left
siblings from the path to the root, so the query time is $O(B \log_B
n)$. 

Our results significantly improve the known trade-offs in the regime
of fast query times. Note that the previous strongest bounds from
\cite{patrascu06loglb} could at most imply
$t_q = \Omega(n^\eps)$ even for constant update time. Here $\eps$ depends
on the constant in the update time. For example, allowing only 4 cell probes 
for the updates, \cite[careful inspection]{patrascu06loglb}
gets a query lower bound of $\Omega(n^{\frac1{16}})$. 
Our Theorem \ref{thm:full} says for another problem that we with 
$o(\log n)$ probes get a query lower bound $\geq n^{1-o(1)}$ queries.

The trade-offs of \cite{patrascu06loglb} are optimal in the full range
for the partial sums problem. For incremental and fully dynamic connectivity,
the previous mild trade-offs \cite{kaplan05meldable,patrascu06loglb} are 
optimal in the regime $t_u \gg t_q$;
it is only the regime of fast updates that causes the abrupt
transitions in Theorems \ref{thm:inc} and \ref{thm:full}.

\paragraph{Lower bounds beyond the balanced tree.}
The previous lower-bounds we discussed are essentially all showing
that the we cannot do much better than maintaining information in a
balanced tree. All operations follow well-understood paths to the roots.
Trade-offs were obtained by increasing the degree,
decreasing the height: the faster of updates and queries would just
follow the path to the root while the slower would have to consider
siblings on the way. The lower bounds from \cite{patrascu06loglb} are
best possible in this regard.

Our stronger trade-offs for incremental and fully-dynamic connectivity shows
that there is no such simple way of organizing information; that the
links between arbitrary vertices changes the structure too much if
the update times is not long enough, we cannot maintain the balanced 
information tree.

\section{Simulation by Communication Games Results}
Generally, for the data structure problems considered, we are going to find an 
input distribution that will make any deterministic
algorithm perform badly on the average. This also implies expected
lower bounds for randomized algorithms.

Consider an abstract dynamic problem with operations
$\proc{Update}(u_i)$ and $\proc{Query}(q_i)$.  Assume the sequence of
operations is of fixed length, and that the type of each operation (query
versus update) is fixed a priori. The ``input'' $u_i$ or $q_i$ of the
operation is not fixed yet.  Let $I_A$ and $I_B$ be two
\emph{adjacent} intervals of operations, and assume that every input
$u_i$ or $q_i$ outside of $I_A \cup I_B$ has been fixed. What remains
free are the inputs $X_A$ during interval $I_A$ and $X_B$ during interval
$I_B$. These inputs $(X_A,X_B)$ follow a given distribution $\calD$.

It is natural to convert this setting into a communication game
between two players: Alice receives $X_A$, Bob receives $X_B$, and
their goal is to answer the queries in $X_B$ (which depend on the
updates in $X_A$). In our applications below, the queries will be
Boolean, and it will even be hard for the players to compute the
\emph{and} of all queries in the $I_B$ interval. Each player is deterministic,
and the two players can exchange bits of information. The last bit
communicated should be the final answer of the game, which
here is the and of the queries in $I_B$.  The complexity of the
game is defined as the total communication (in bits) between the players,
in expectation over $\calD$.

We will work in the cell-probe model with $w$-bit cells; in the
applications below, $w = \Theta(\lg n)$. For notational convenience,
we assume the data structure must read a cell immediately before
writing it (but it may choose to read a cell without rewriting it). Let
$W_A$ be the set of cells written during time interval $I_A$, and
$R_B$ be the set of cells read during interval $I_B$.

\begin{lemma}   \label{lem:bloom}
For any $p \ge 0$, the communication game can be solved by a
zero-error protocol with complexity $\E_\calD\big[ |W_A|\cdot O(\lg
  \frac{1}{p}) + O(w) \cdot \big( |W_A \cap R_B| + p |R_B| \big)
  \big]$.
\end{lemma}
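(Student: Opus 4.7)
The plan is to reduce Bob's task to deciding, for each cell he reads during $I_B$, whether that cell was overwritten by Alice during $I_A$: if not, its contents equal the pre-$I_A$ state (which both players know, since all inputs outside $I_A\cup I_B$ are fixed); if yes, Bob will need the new value from Alice. Cells Bob himself writes during $I_B$ he tracks locally. So the only thing missing from Bob's simulation is approximate membership in $W_A$ plus the updated contents of those cells he actually touches.

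The key idea I would use is a Bloom filter. Alice sends a Bloom filter for $W_A$ with false-positive rate $p$, using $O(|W_A|\lg(1/p))$ bits. The filter has no false negatives, so whenever it says a queried cell is not in $W_A$, Bob safely reads the pre-$I_A$ value. Whenever the filter reports ``possibly yes'' on a cell $c \in R_B$, Bob sends $c$'s address to Alice, and Alice replies with $O(w)$ bits containing a flag that distinguishes true from false positives, together with the updated value when the flag is true. Because the filter never errs on the negative side, the protocol is zero-error.

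For complexity, the filter accounts for $O(|W_A|\lg(1/p))$ bits. The remaining communication consists of $O(w)$-bit rounds, one per positive filter answer on $R_B$; these split into $|W_A \cap R_B|$ true positives and at most $p\cdot|R_B\setminus W_A| \le p\,|R_B|$ false positives in expectation. Summing and taking expectation over $\calD$ yields exactly the stated bound.

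The main delicate point will be making the expected false-positive count clean: the lemma takes the expectation over $\calD$ only, while the false-positive guarantee of a Bloom filter is with respect to its hash randomness. I would handle this by using a public-coin protocol (so that the two expectations compose and one can fix coins achieving the average), or equivalently by invoking the standard construction with $\Theta(\lg(1/p))$ hash functions on $\Theta(|W_A|\lg(1/p))$ bits, whose false-positive rate holds pointwise for every fixed query set $R_B\setminus W_A$. Everything else is bookkeeping: the pre-$I_A$ state is determined by inputs outside $I_A\cup I_B$ and hence common knowledge, and the simulation during $I_B$ is purely deterministic once Bob can resolve each read against Alice's filter.
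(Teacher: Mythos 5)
Your proposal is correct and is essentially the same argument as the paper's: Alice sends a Bloom filter for $W_A$ with false-positive rate $p$, Bob resolves each read against it (handling the three cases: written during $I_B$, positive, negative) and requests the $O(w)$-bit contents from Alice on each positive, giving $|W_A\cap R_B|$ true positives plus an expected $p|R_B|$ false positives. The public-coin subtlety you flag is handled the same way in the paper, which notes the hash functions can be chosen by public coins and fixed afterward since the bound is in expectation over $\calD$.
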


\begin{proof}
Alice first simulates the data structure on the interval $I_A$. The
memory state at the beginning of $I_A$ is fixed. After this simulation
Alice constructs a Bloom filter~\cite{bloom70filter} with error (false
positive) probability $p$ for the cells $W_A$.  The hash functions
needed by the Bloom filter can be chosen by public coins, which can
later be fixed since we are working under a distribution.  Alice's
first message is the Bloom filter, which requires $|W_A| \cdot O(\lg
\frac{1}{p})$ bits.

Bob will now attempt to simulate the data structure on $I_B$. The
algorithm may try to read a cell of the following types:
\begin{itemize*}
\item a cell previously written during $I_B$: Bob already knows its
  contents.
\item a cell that is positive in the Bloom filter: Bob sends the
  address of the cell to Alice, who replies with its contents; this
  exchange takes $O(w)$ bits.
\item a cell that is negative in the Bloom filter: Bob knows for sure
  that the cell was not written during $I_A$. Thus, he knows its
  contents, since it comes from the old fixed memory snapshot before
  the beginning of $I_A$.
\end{itemize*}
\noindent
With this simulation, Bob knows all the his answers and can transmit the final
bit telling if they are all true.
The number of messages from Bob is $|W_A \cap R_B|$ (true
positives) plus an expected number of false positives of at most $p |R_B|$.
\end{proof}

We will use the simulation to obtain lower bounds for $|W_A \cap
R_B|$, comparing the complexity of the protocol with a communication
lower bound. This simulation works well when $|W_A \cap R_B| \approx
|W_A \cup R_B| / \frac{\lg n}{\lg\lg n}$, since we can use $p \approx
\frac{1}{\lg n}$, and make the term $|W_A \cap R_B|$
dominate. Unfortunately, it does not work in the regime $|W_A \cap
R_B| \approx |W_A \cup R_B| / \lg n$, since one of the terms
proportional to $|W_A|$ or $|R_B|$ will dominate, for any $p$.

To give a tighter simulation, we use a stronger communication model:
nondeterministic complexity. In this model, a prover sends a public
proof $Z$ to both Alice and Bob. Alice and Bob independently decide
whether to accept the message, and they can only accept if the output
of the communication game is ``true'' (i.e.~all queries in $I_B$
return true). In this model Alice and Bob do not communicate with each
other. Alice's answer is a deterministic function $f_A(X_A,Z)$ of her
own input and the public proof. Similarly, we have Bob's answer
$f_B(X_B,Z)$.  For the protocol to be correct, $f_A(X_A,Z)$ and
$f_B(X_B,Z)$ may only both be true if this is the answer to the game.

Our goal for the prover is to define a short public proof $Z(X_A,X_B)$
that will lead Alice and Bob to the desired answer 
$f_A(X_A,Z(X_A,X_B))\wedge f_B(X_B,Z(X_A,X_B))$. The
complexity of the protocol is the  of the
game should be the and of all queries in $I_B$.
Since we are working under a distribution, 
the bit length of the prover's message $Z(X_A,X_B)$ is a random variable, and 
we define
the complexity of the protocol as its expectation.

\begin{lemma}   \label{lem:bloomier}
The communication game can be solved by a nondeterministic protocol
with complexity $\E_\calD \big[ O(w) \cdot |W_A \cap R_B| + O(|W_A
  \cup R_B|) \big]$.
\end{lemma}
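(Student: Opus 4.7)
My plan is to mirror the Bloom-filter simulation of Lemma~\ref{lem:bloom} but exploit the prover's extra power: since $Z$ may depend on both $X_A$ and $X_B$, the prover can first compute $W_A$, $R_B$, and the intersection $T = W_A \cap R_B$ exactly, and then hand Alice and Bob an encoding tailored to this specific pair. The target is to replace the Bloom filter (which paid $|W_A|\cdot O(\log \tfrac{1}{p})$ bits for addressing and a false-positive charge $p|R_B|$ on the read side) by a zero-error analogue of size $O(|W_A\cup R_B|)$ for the addressing and $O(w)\cdot|T|$ for the contents—exactly the two terms in the lemma.

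Concretely, I would have the prover send three pieces. First, a minimal perfect hash $h\colon W_A\cup R_B\to [O(|W_A\cup R_B|)]$ (e.g.\ via the Hagerup--Tholey construction), whose description fits in $O(|W_A\cup R_B|)$ bits and is evaluable on any cell address by both players. Second, a bit vector of length $O(|W_A\cup R_B|)$ whose $h(c)$-th entry is $1$ iff $c\in T$. Third, a slot-ordered list of the end-of-$I_A$ contents of the cells in $T$, occupying $O(w)\cdot|T|$ bits. Alice's verification function runs the data structure on $I_A$ from the fixed pre-$I_A$ snapshot and, for every $c\in W_A$ whose hash slot is marked, checks that the listed content agrees with her computed content. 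Bob's verification function simulates $I_B$ from the same snapshot: at each cell read $c$ he consults bit $h(c)$, taking the listed content if the bit is set and the public pre-$I_A$ content otherwise, and accepts iff every query in $I_B$ returns true.

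Completeness is immediate from the honest encoding: Alice's bookkeeping agrees with her own simulation, and Bob's virtual simulation coincides with the true execution of $I_B$, so all queries evaluate to true. The delicate step, which I expect to be the main obstacle, is soundness—arguing that on a false instance no $Z$ makes both parties accept. I would close this by a tracing argument. Alice's check forces every content the prover claims at a marked slot of $W_A$ to be truthful, and by giving her a symmetric check over the list (rejecting slots whose stored content is not the post-$I_A$ value of an element of $W_A$) we also rule out marking a slot outside $W_A$ with a non-pre-$I_A$ value. Thus the only possible lie left is to unmark some $c^\star\in W_A\cap R_B$. Proceeding by induction on read index, one shows that any first deviation of Bob's virtual execution from the truthful one either forces a content Alice rejects, or leaves the two executions in the same state; in particular the virtual AND equals the true AND, so if the true answer is false Bob rejects. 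Summing the three pieces gives the claimed $\E_\calD\big[O(w)\cdot|W_A\cap R_B| + O(|W_A\cup R_B|)\big]$ bound.
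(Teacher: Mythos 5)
Your encoding has the right shape and the right bit budget, but the soundness argument has a genuine gap that the ``tracing argument'' does not close. In the paper's protocol, the prover encodes, via a retrieval dictionary over the \emph{symmetric difference} $W_A \Delta R_B$, a one-bit flag per cell saying which side the cell lies on, and Bob \emph{rejects outright} whenever a cell he reads returns the ``$W_A$-side'' flag. Alice's check (every cell of $W_A$ not claimed to be in $X$ must return the $W_A$-side flag) forces that flag onto any omitted cell $c^\star \in W_A\cap R_B$, so Bob is guaranteed to reject the first time he reads $c^\star$. That rejection rule is precisely what is missing from your scheme: your bit vector only marks the claimed intersection, and Bob silently substitutes the stale pre-$I_A$ content whenever a slot is unmarked. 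So if the prover drops $c^\star\in W_A\cap R_B$ from $T$, Alice never looks at that slot (she only inspects marked ones), Bob uses the wrong content, and his virtual execution diverges. Once it diverges, Bob's subsequent reads need not lie in $R_B$, so nothing ties them to slots Alice vetted; and the prover, who knows both $X_A$ and $X_B$ and can therefore fully predict Bob's divergent run, can select the omission precisely so that Bob's corrupted simulation outputs ``all true'' on a false instance. The inductive claim that ``the first deviation either forces a content Alice rejects or leaves the two executions in the same state'' is false: the first deviation occurs at an \emph{unmarked} slot (which Alice ignores) and by definition changes the state.

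A related, secondary weakness: neither player can verify that $h$ really is a (minimal perfect) hash on $W_A\cup R_B$, since Alice does not know $R_B$ and Bob does not know $W_A$. A malicious $h$ can collide a marked $c\in W_A$ with a $c'\in R_B\setminus W_A$, again corrupting Bob's simulation undetected. Both issues have the same root cause: your proof $Z$ gives Bob no reliable way to decide, for a cell $c$ he reads, whether $c\in W_A$. The paper's one extra bit per non-intersection cell is exactly the information that lets Bob reject rather than proceed blindly, and it is the step your proposal would need to add to be correct.
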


\begin{proof}
We will use a retrieval dictionary (a.k.a.~a Bloomier filter, or a
dictionary without membership). Such a dictionary must store a set $S$
from universe $U$ with $k$ bits of associated data per element of
$S$. When queried for some $x\in S$, the dictionary must retrieve
$x$'s associated data. When queried about $x \notin S$, it may return
anything.  One can construct retrieval dictionaries with space $O(k|S|
+ \lg\lg |U|)$; see e.g.~\cite{dietzfel08retrieval}.

The message $Z(X_A,X_B)$ of the prover will consist of the addresses and contents
of the cells $X = |W_A\cap R_B|$, taking $O(w)$ bits each. In addition, he
will provide a retrieval dictionary for the symmetric difference $W_A
\Delta R_B = (W_A \setminus R_B) \cup (R_B \setminus W_A)$. In this
dictionary, every element has one associated bit of data: zero if the
cell is from $W_A \setminus R_B$ and one if from $R_B \setminus W_A$.
The dictionary takes $O(\lg w + |W_A \cup R_B|)$ bits.

Alice first simulates the data structure on $I_A$. Then she verifies
that all cells $X$ were actually written ($X \subseteq W_A$), and
their content is correct. Furthermore, she verifies that for all cells
from $W_A \setminus X$, the retrieval dictionary returns zero. If
some of this fails, she rejects with a false.

Bob simulates the data structure on $I_B$. The algorithm may read
cells of the following types:
\begin{itemize*}
\item cells previously written during $I_B$: Bob knows their contents.

\item cells from $X$: Bob uses the contents from public proof 
  (Alice verified these contents).

\item cells for which the retrieval dictionary returns \emph{one}: Bob
  uses the contents from the fixed memory snapshot before
  the beginning of $I_A$ (Alice verified she didn't write such cells).

\item cells for which the retrieval dictionary return \emph{zero}: Bob
  rejects. The prover is trying to cheat, since in a correct
  simulation all cells of $R_B \setminus X$ has a one bit in the
  dictionary.
\end{itemize*}
\noindent
If neither player rejects, we know that $R_B \setminus X$ is disjoint
from $W_A \setminus X$, so the simulation of Bob is correct. Finally
Bob rejects if any of his answers are false.
\end{proof}

\section{Lower Bound for Incremental Connectivity}

\begin{theorem}
Any data structure for incremental connectivity over $n$ vertices that
supports edge insertions between roots in worst-case time $\tunion =
o(\frac{\lg n}{\lg\lg n})$ and arbitrary edge insertions in worst-case
time $\tlink = o(\frac{\lg n}{\lg\lg n})$ must have query time $\tfind
\ge n^{1-o(1)}$.
\end{theorem}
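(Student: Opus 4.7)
The plan is to apply the nondeterministic simulation of Lemma~\ref{lem:bloomier} to a carefully chosen hard distribution and contradict it with a matching communication lower bound. I assume for contradiction that $\tfind \le n^{1-\delta}$ for some constant $\delta > 0$, alongside the hypotheses on $\tunion$ and $\tlink$. The operation sequence has a single Alice-phase $I_A$ of $\Theta(n)$ edge insertions followed by a single Bob-phase $I_B$ of $Q = n^{o(1)}$ connectivity queries. Alice receives a high-entropy random object $X_A$ --- for concreteness, a uniformly random matching or permutation on two sides of $n$ vertices --- and realises it by $\Theta(n)$ edge insertions; I would issue each insertion as a \proc{Union} when it connects two existing roots (costing $\tunion$) and as a \proc{Link} otherwise (costing $\tlink$), so Alice's total cell writes satisfy
\[
|W_A| \;\le\; n\cdot\max(\tunion,\tlink) \;=\; o\bigl(n\lg n/\lg\lg n\bigr).
\]
Bob receives $X_B$ consisting of $Q = n^{o(1)}$ Boolean connectivity queries, crafted so that the \emph{and} of their answers encodes a high-entropy relation between $X_A$ and $X_B$; the target is that the \emph{nondeterministic} communication complexity of this conjunction, under the product distribution, is $\Omega(n\lg n)$, the intuition being that certifying the ``true'' outcome should force the prover to effectively commit to almost all of $X_A$.

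With $w=\Theta(\lg n)$ and $|R_B|\le Q\tfind$, Lemma~\ref{lem:bloomier} then yields
\[
O(\lg n)\cdot\E\bigl[|W_A\cap R_B|\bigr] \;+\; O(n\tlink) \;+\; O(Q\tfind) \;\ge\; \Omega(n\lg n).
\]
Since $n\tlink = o(n\lg n/\lg\lg n)$, the middle term is absorbed, and using $|W_A\cap R_B|\le Q\tfind$ this forces $Q\tfind = \Omega(n)$, i.e.\ $\tfind \ge \Omega(n/Q) = n^{1-o(1)}$, contradicting the assumption $\tfind\le n^{1-\delta}$.

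The main obstacle is establishing the $\Omega(n\lg n)$ \emph{nondeterministic} communication lower bound when Bob's output is a single bit and he has only $n^{o(1)}$ queries: generic entropy arguments give only a one-bit lower bound, so the queries must be designed so that their conjunction acts as a \emph{certificate} for a high-entropy property of $X_A$, and a richness- or direct-sum-style argument is then needed to rule out prover compression. A secondary subtlety is making Alice's construction decompose cleanly into root-to-root unions and arbitrary links so that the separate $\tunion$ and $\tlink$ budgets of the theorem are both legitimately invoked by her update sequence.
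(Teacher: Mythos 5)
Your plan has a gap that you partly flag yourself, and it cannot be patched within your setup. In the nondeterministic model of Lemma~\ref{lem:bloomier}, the complexity of the one-bit conjunction is governed by a cover of the accepting inputs by monochromatic rectangles, and every \emph{column} of the communication matrix is already a rectangle. Since you give Bob only $Q = n^{o(1)}$ Boolean queries as his private input, he has at most $2^{O(Q\lg n)} = 2^{n^{o(1)}}$ possible inputs, hence at most that many columns, hence a trivial nondeterministic protocol of cost $n^{o(1)}$. No richness- or direct-sum-style argument can push this past $n^{o(1)}$, let alone to $\Omega(n\lg n)$: the intuition that ``certifying true forces the prover to commit to $X_A$'' does not translate into a lower bound in this model, because the prover is measured only by proof length against rectangle covers, not by how much of Alice's input it implicitly determines.

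The idea you are missing, which is the crux of the paper's construction, is that Bob's interval must contain \emph{links} as well as queries. The paper's ``metaquery'' first links each of $M$ randomly chosen leaves to one of $C$ colored vertices --- the proposed coloring is Bob's input $X_B = \calC(Q^\star)$, with entropy $\Theta(MB^i\lg C)$ (Claim~\ref{cl:balance-color}) --- and only then makes $C-1$ connectivity queries whose conjunction certifies consistency with the root colors. Because the correct coloring is uniquely determined by Alice's epoch-$i$ edges, any valid transcript pins down $X_B$, giving the $\Omega(MB^i\lg C)$ communication lower bound (Claim~\ref{cl:high-complex}). A second structural ingredient is also essential: the paper uses the Fredman--Saks \emph{epoch} decomposition rather than your single $I_A/I_B$ split, proving the per-epoch bound $\E[|R^Q\cap(W_i\setminus W_{<i})|] = \Omega(\eps M)$ (Lemma~\ref{lem:inc-main}) and summing over $\Theta(\eps\lg n/\lg\lg n)$ epochs. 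This amplification is needed precisely because Bob's own $O(M)$ links in the metaquery already consume $O(M\tlink)$ probes; those probes would swamp a single-epoch bound, but become a lower-order term against the summed total $\Omega(\eps^2 M\lg n/\lg\lg n)$. As minor points: the paper applies the Bloom-filter simulation (Lemma~\ref{lem:bloom}, with false-positive rate $1/\lg n$) rather than Lemma~\ref{lem:bloomier} in this incremental case, and it resolves your secondary concern about the $\tunion$/$\tlink$ split by keeping all of Alice's epoch edges as root-to-root unions while reserving the arbitrary links for Bob's metaquery.
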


Let $\eps=o(1)$ be such that $\tunion = o(\eps^2 \lg n / \lg\lg n)$.
Define $B = \lg^2 n$, $C = n^\eps$, and $M=n^{1-\eps}$. 

The starting point of our hard instance is essentially taken from
Fredman and Saks' seminal paper~\cite{fredman89cellprobe}. 
The
hard instance will randomly construct a forest of $M$ trees. Each tree
will be a perfect tree of degree $B$ and height $\log_B (n/M)$. On layer $0$ 
of the forest we have the $M$ roots. On layer $i$, we have exactly 
$M \cdot B^i$ vertices with $B^i$ vertices from each tree.

We can describe the edges between level $i$ and $i-1$ as a function
$f_i: [M\cdot B^i] \to [M\cdot B^{i-1}]$ that is balanced: for each $x
\in [MB^{i-1}]$, $|(f_i)^{-1}(x)| = B$. We will use the following
convenient notation for composition: $f_{\ge i} = f_i \circ f_{i+1}
\circ \cdots$. For example, the ancestor on level $i-1$ of leaf $x$ is
$f_{\ge i}(x)$.

Our hard instance will insert the edges describing $f_i$'s in
bottom-up fashion (i.e.~by decreasing $i$, from the largest level up
to the roots). We call ``epoch $i$'' the period of time when the edges
$f_i$ are inserted. Let $W_i$ (respectively $R_i$) be the cells
written (respectively, read) in epoch $i$. Observe that $|W_i| + |R_i|
\le M\cdot B^i \tunion$. We will use the following convenient notation for
set union: $W_{\le i} = \bigcup_{j \le i} W_j$. The cells $W_i
\setminus W_{<i}$ are those \emph{last} written in epoch $i$.

All the above edges where added in union-find style from roots of
current trees, and indeed the above constitutes the hard case for
union-find from~\cite{fredman89cellprobe}. At this
point~\cite{fredman89cellprobe} shows that finding a root from a
random leaf would entail reading cells from most epochs in
$\Omega(\log n/\log B)$ expected time.

Our goal is to show that linking arbitrary vertices may lead to much
more expensive queries. We will describe some very powerful metaqueries
that combines links to roots and leaves with a few connectivity to reveal
far more information than if we only had the regular connectivity queries.
The metaqueries will be provably hard to answer, so if the links are
done too quickly, the queries must be very slow.

Our graph contains $C$ additional special vertices, conceptually 
colored with the colors $1 \twodots C$. Each colored vertex is
connected to $M/C$ nodes on level 0 (the final roots of our
trees). This is done in a fixed pattern: colored vertex $1$ is
connected to roots $1, \dots, M/C$; colored vertex $2$ to the next
$M/C$ roots; etc. These edges can be inserted at the very beginning of
the execution, prior to any interesting updates.

At the end of epoch 1 all trees are complete. In this state, we say
the {\em root color\/} of a vertex is the color that its root is
connected to. Conceptually, the hard distribution colors a random set
$Q$ of exactly $M$ leaves and verifies that these are the root colors.

To implement this test by incremental connectivity operations
($\proc{Link}$), we first link each query leaf to the proposed
colored vertex. Then, for $i=2 \twodots C$, we query whether colored
vertex $i$ is connected to colored vertex $i-1$, and then insert an
edge between these two color nodes. The metaquery returns ``true'' iff
all connectivity queries are negative.

We claim that if the metaquery answers true, the coloring of $Q$ must
be consistent with the coloring of the roots. Indeed, if some leaf is
colored $i$ and its root is colored $j\ne i$, this inconsistency is
caught at step $\max\{i,j\}$. At this step, everything with color $\le
\max\{i,j\}-1$ has been connected into a tree, so the connectivity
query will return true.

Let $\calC(Q)$ be the coloring of leaves in $Q$ that matches their root
colors.  In the hard distribution, the metaquery always receives proposed
colors from $\calC(Q)$, so it should answer true. Nevertheless, the
data structure will need to do a lot of work to verify this. Let $R^Q$
be the cells read during the metaquery. We have $|R^Q| \le C\cdot t_q
+ 2M\cdot \tunion$. The main claim of our proof is:

\begin{lemma}   \label{lem:inc-main}
For any $i \in \{1, \dots, \log_B (n/M) \}$, we have $\E[|R^Q \cap (W_i
  \setminus W_{<i})|] = \Omega(\eps M)$.
\end{lemma}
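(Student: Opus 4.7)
Fix epoch~$i$. The plan is to invoke Lemma~\ref{lem:bloomier} with a carefully chosen communication game that isolates $f_i$ as Alice's information and the metaquery as Bob's task, and to match its upper bound against a rectangle-style lower bound on the nondeterministic communication complexity. The first step is to take $I_A$ to consist of epoch~$i$ together with the subsequent (fixed-input) epochs $i-1,\ldots,1$ so that $I_A$ and the metaquery interval $I_B$ are adjacent; condition on all $f_j$ with $j\neq i$ being fixed (Yao-style) so Alice's only random input is~$f_i$; and let Bob receive $(Q,\calC)$. To produce a game with nontrivial communication complexity I would extend the hard distribution so that $\calC$ is drawn uniformly from $[C]^Q$, independent of~$(f_i,Q)$; under this distribution the metaquery returns \emph{true} iff $\calC=\calC(Q)$, an event of probability $C^{-M}$.

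The second step is a lower bound of $\Omega(M\log C)=\Omega(\eps M\log n)$ on the nondeterministic complexity of this game via a rectangle / fooling-set argument: the $1$-set has size $|\{\text{balanced }f_i\}|\cdot\binom{n}{M}$, while for any fixed Bob input $(Q,\calC)$ only a $C^{-M}$ fraction of balanced $f_i$'s satisfy $\calC(Q)=\calC$, so any $1$-monochromatic rectangle is small enough to force at least $C^M$ rectangles in any cover. A subtlety is the dilution caused by the small acceptance probability, which I plan to handle either by mixing the modified distribution with the hard distribution to place constant mass on accepting inputs, or by a direct reduction that exploits the deterministic time guarantee of the data structure to avoid averaging over rejecting inputs.

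The third step combines the two sides: Lemma~\ref{lem:bloomier} gives the upper bound $\E\big[O(w)\cdot|W_A\cap R_B|+O(|W_A\cup R_B|)\big]$ on the nondeterministic complexity. Matching this against the $\Omega(\eps M\log n)$ lower bound, and invoking the hypothesis $\tunion=o(\eps^2\lg n/\lg\lg n)$ (after the WLOG reduction $t_q\le n^{1-o(1)}$, outside which the theorem is trivial) to absorb $O(|W_A\cup R_B|)=O(MB^i\tunion+Ct_q)$ into $o(\eps M\log n)$, yields $\E[|W_A\cap R_B|]=\Omega(\eps M)$. Finally, the discrepancy between $|W_A\cap R_B|$ and $|R^Q\cap(W_i\setminus W_{<i})|$---cells read in the intermediate epochs or cells in $W_i\cap W_{<i}$ read during the metaquery---is at most $|W_{<i}|=O(MB^{i-1}\tunion)=o(\eps M)$ using the read-before-write convention, which delivers the lemma.

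The hard part is the nondeterministic rectangle lower bound in the presence of a rare accepting event: a protocol that trivially rejects achieves negligible expected complexity, so the $\Omega(M\log C)$ bound must be argued carefully to apply to the expected complexity in the form demanded by Lemma~\ref{lem:bloomier}. The $\tunion$ hypothesis is calibrated exactly to suppress the additive term in that lemma below the $\Omega(\eps M\log n)$ threshold, and the remaining bookkeeping (disjointness of the $W_j\setminus W_{<j}$'s, geometric growth of $|W_j|$, and the relationship between reads across adjacent intervals) is routine.
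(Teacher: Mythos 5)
Your proposal departs from the paper's proof in several structural ways, and each departure introduces a gap that I don't see how to close.

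\textbf{Missing the $B^i$-amplification.} The paper's central trick is to consider $B^i$ i.i.d.\ metaqueries $Q^1,\dots,Q^{B^i}$, each simulated ``in isolation'' from the memory snapshot taken after all trees are built (exploiting the cell-probe model's unbounded local state so the forked simulations don't interfere). This amplifies both the entropy lower bound to $\Omega(MB^i\lg C)$ and the relevant read set to $R^\star$ with $\E[|R^\star\cap(W_i\setminus W_{<i})|]\le B^i\,\E[|R^Q\cap(W_i\setminus W_{<i})|]$. Without this, a single metaquery gives you at most $\Omega(M\lg C)=\Omega(\eps M\lg n)$ on the communication side. But $|W_A\cup R_B|$ necessarily contains $W_i$, and $|W_i|$ can be as large as $MB^i\tunion$, which for any $i\ge1$ already exceeds $\eps M\lg n$ (e.g.\ $i=1$: $M\lg^2 n\cdot\tunion$ vs.\ $\eps M\lg n$ forces $\tunion=o(\eps/\lg n)$, impossible). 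So the additive $O(|W_A\cup R_B|)$ term in Lemma~\ref{lem:bloomier} swamps your lower bound no matter where you draw the $I_A/I_B$ boundary. The amplification is not a refinement; it is load-bearing.

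\textbf{Random colorings and rare acceptance.} You correctly identify that randomizing $\calC$ makes the accepting event have probability $C^{-M}$, and that a trivial-reject protocol then has negligible expected cost. But you don't resolve this, and the proposed fixes don't obviously work: mixing with the consistent distribution still leaves $\E[|W_A\cap R_B|]$ potentially dominated by the rejecting mass (you cannot deduce a lower bound on the conditional expectation given acceptance from a lower bound on the unconditional expectation), and ``a direct reduction exploiting the deterministic time guarantee'' is not specified. The paper sidesteps this entirely by keeping the colorings always consistent, so every input is a 1-input, and then running a rectangle/entropy argument for a \emph{zero-error deterministic} protocol (Lemma~\ref{lem:bloom}, not Lemma~\ref{lem:bloomier}): any monochromatic rectangle can contain only one Bob-input $\calC(Q^\star)$, so the transcript determines $\calC(Q^\star)$ and the communication is at least $\HH(\calC(Q^\star))$. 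Your fooling-set count over a sparse acceptance set is solving a harder problem than necessary.

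\textbf{A concrete arithmetic error.} Your final step claims $|W_{<i}|=O(MB^{i-1}\tunion)=o(\eps M)$. For $i\ge2$ this requires $B^{i-1}\tunion=o(\eps)$, which is false: $B^{i-1}\tunion\ge1$ while $\eps=o(1)$. (For $i=1$ it holds vacuously since $W_{<1}=\emptyset$.) This is a symptom of the boundary placement: you put epochs $i-1,\dots,1$ on Alice's side, so $W_{<i}\subseteq W_A$ and its intersection with $R_B$ gets multiplied by $O(w)$. The paper instead puts epochs $i-1,\dots,1$ on Bob's side; then $W_{<i}$ enters only through $R_{<i}$, which is a factor $B$ smaller than $W_i$ and is further suppressed by the Bloom filter's $p=1/\lg n$ false-positive rate, yielding the manageable term $O(MB^{i-1}\tunion\lg n)=o(MB^i\eps\lg n)$.

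In short: you need the $B^i$ replicated metaqueries, the always-consistent coloring with an entropy argument against a zero-error protocol, and the boundary placed so that the intermediate epochs fall in Bob's interval. Each of these differs from your plan, and without them the inequality chain does not close.
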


Before we prove the lemma, we show that it implies our lower
bound. The sets $W_i \setminus W_{<i}$ are disjoint by construction,
so $\sum_i \E[|R^Q \cap (W_i \setminus W_{<i})|] \le \E[|R^Q|]$. 
Remember
that we have $\log_B(n/M) = O(\log(n^\eps) / \lg\lg n) = O(\eps \lg n
/ \lg\lg n)$ epochs. Thus $\E[|R^Q|] = \Omega(M \cdot \eps^2 \lg n /
\lg\lg n)$.  But we always have
$|R^Q| \le C \cdot t_q + 2M \cdot t_u = C t_q
+ o(M \frac{\eps^2 \lg n}{\lg\lg n} )$, by choice of $\eps$. It follows that
$Ct_q$
is the dominant term in $\E[|R^Q|]$, so $t_q = \Omega(M\eps^2 (\lg n /
\lg\lg n)/C)\ge n^{1-2\eps}$.

\paragraph{Proof of Lemma~\ref{lem:inc-main}.}
Fix $i$. We will prove the stronger statement that the lower bound holds
no matter how we fix the edges outside epoch $i$ (all $f_j$'s for $j\ne
i$).
 
To dominate the work of later epochs $i-1,\dots,1$, we consider $B^i$
i.i.d.~metaqueries.
Choose sets $Q^1, Q^2, \dots,Q^{B^i}$ independently, each
containing $M$ uniformly chosen leaves.  Starting from the memory
state where all trees are completely built and the roots have been colored,
we simulate each metaquery
$(Q^j, \calC(Q^j))$ in isolation. We do not need to 
write any cells in this simulation, for the cell-probe
model has unbounded state to remember intermediate results and in our
hard distribution there is no operation after the metaquery. Thus the
simulations of the different metaqueries do not influence each other. Let $R^\star$ be
the cells read by all $B^i$ metaqueries. By linearity of expectation,
$\E[|R^\star \cap (W_i \setminus W_{<i})|] \le B^i \cdot \E[|R^Q \cap
  (W_i \setminus W_{<i})|]$.

Let $Q^\star = \bigcup_j Q^j$. Since we have fixed all $f_{>i}$,
asking about the root color of a leaf $q\in Q^\star$ is equivalent to
asking about the root color of node $f_{>i}(q)$ on level $i$. 

\begin{claim}
We have $\E[|f_{>i}(Q^\star)|] \ge (1- \frac{1}{e}) MB^i$.
\end{claim}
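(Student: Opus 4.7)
The plan is to bound, for each level-$i$ vertex $v$, the probability that $v$ fails to lie in $f_{>i}(Q^\star)$, and then sum over the $MB^i$ such vertices by linearity of expectation. Since the edges $f_{>i}$ have been fixed at the outset, every level-$i$ node $v$ is the ancestor of exactly $k := n/(MB^i)$ leaves, and $v \in f_{>i}(Q^\star)$ iff at least one of these $k$ leaves belongs to some $Q^j$.

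The key step is to use the i.i.d.\ structure of the $B^i$ queries. First I would write
\[
\Pr[v \notin f_{>i}(Q^\star)] \;=\; \prod_{j=1}^{B^i} \Pr[\,Q^j \text{ avoids the } k \text{ leaves below } v\,].
\]
For each factor, $Q^j$ is a uniform $M$-subset of the $n$ leaves, so the probability of missing a fixed set of $k$ leaves equals
\[
\frac{\binom{n-k}{M}}{\binom{n}{M}} \;=\; \prod_{t=0}^{k-1}\frac{n-M-t}{n-t} \;\le\; \Bigl(1-\tfrac{M}{n}\Bigr)^{k}.
\]
Multiplying the $B^i$ factors together and using the telescoping identity $k\cdot B^i = n/M$, I would conclude
\[
\Pr[v \notin f_{>i}(Q^\star)] \;\le\; \Bigl(1-\tfrac{M}{n}\Bigr)^{n/M} \;\le\; e^{-1}.
\]

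Finally, by linearity of expectation over the $MB^i$ level-$i$ vertices,
\[
\E[|f_{>i}(Q^\star)|] \;=\; \sum_{v} \Pr[v \in f_{>i}(Q^\star)] \;\ge\; \bigl(1-\tfrac{1}{e}\bigr)MB^i,
\]
as claimed. There is no real obstacle here; the only subtlety is the ``without-replacement'' nature of each $Q^j$, which I would handle either by the direct algebraic inequality on the ratio of binomials displayed above, or equivalently by noting that the events ``leaf $\ell \notin Q^j$'' are negatively correlated across leaves under $v$, so the probability of missing the whole subtree is dominated by the product of marginals $(1-M/n)^k$.
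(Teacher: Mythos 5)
Your argument is correct, and it is the same balls-into-bins calculation as the paper's, but carried out from the dual viewpoint. The paper's three-sentence proof looks at the $MB^i$ leaf-picks as trials, treats each trial's level-$i$ ancestor as (roughly) a uniform bin, and waves at the within-$Q^j$ dependence with the remark that distinctness of leaves inside a single $Q^j$ can only help coverage (``positively correlated''). You instead fix a level-$i$ vertex $v$ and directly bound the probability that all $B^i$ independent $M$-subsets miss the $k = n/(MB^i)$ leaves below $v$, via $\binom{n-k}{M}/\binom{n}{M} \le (1-M/n)^k$ and then $(1-M/n)^{kB^i} = (1-M/n)^{n/M} \le e^{-1}$. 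This bin-centric computation is a bit longer but makes rigorous precisely the step the paper leaves informal: the hypergeometric miss probability per query is dominated by the with-replacement bound, so distinctness within a $Q^j$ only increases coverage. Both routes deliver the claim; yours buys a self-contained inequality at the cost of a few extra lines, while the paper's gains brevity at the cost of a heuristic appeal.
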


\begin{proof}
Each leaf $x$ in some $Q^j$ is chosen uniformly, so its ancestor
$f_{<i}(x)$ is also uniform. The $M\cdot B^i$ trials are independent
(for different $Q^j, Q^k$), or positively correlated (inside the same
$Q^j$, since the leaves must be distinct). Thus, we expect to collect
$(1-1/e) MB^i$ distinct ancestors.
\end{proof}

By the Markov bound $|f_{>i}(Q^\star)| \ge \frac{1}{2} MB^i$ with
probability at least $1 - 2/e$. Thus we may fix the sequence $(Q^1, Q^2, \dots,Q^{B^i})$ to a value that achieves $|f_{>i}(Q^\star)| \ge \frac{1}{2} MB^i$ while
increasing $\E[|R^\star \cap (W_i \setminus W_{<i})|]$ by at most
$(1-2/e)^{-1} = O(1)$. 

The only remaining randomness in our instance are the edges $f_i$ from
epoch $i$ and the proposed colorings $\calC(Q^j)$ given to each
metaquery $Q^j$. To be valid, these colorings are functions of $f_i$, for
as soon as we know $f_i$, we know the whole forest including
the root colors of all the leaves in the different $Q^j$.
The metaquery colors have to agree on common leaves, so they
provide us a coloring $\calC(Q^*)$. With $f_i$ yet unknown, we
claim that $\calC(Q^\star)$ has a lot of entropy:
\begin{claim}\label{cl:balance-color} $\HH(\calC(Q^\star))=\Omega(MB^i \lg C)$.
\end{claim}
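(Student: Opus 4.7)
The plan is to reduce the claim to an entropy computation on a uniformly random balanced coloring of level $i$. For a leaf $q \in Q^\star$, its root color is $\gamma(f_i(f_{>i}(q)))$, where $\gamma: [MB^{i-1}] \to [C]$ is the coloring of level $i-1$ induced by the fixed $f_{<i}$ composed with the fixed assignment of roots to the special colored vertices. This $\gamma$ is perfectly balanced: each color has exactly $(M/C) \cdot B^{i-1} = MB^{i-1}/C$ preimages at level $i-1$, since every root has exactly $B^{i-1}$ descendants at level $i-1$ and exactly $M/C$ roots carry each color. Since $f_{>i}$ is fixed and $S := f_{>i}(Q^\star)$ is exactly the image, the tuple $\calC(Q^\star)$ is information-equivalent to the restriction $c|_S$ of $c := \gamma \circ f_i$: one reconstructs $\calC(Q^\star)$ from $c|_S$ via $q \mapsto c(f_{>i}(q))$, and conversely $c(v)$ for $v \in S$ is read off any $q$ with $f_{>i}(q)=v$. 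The previous Markov step already gives $|S| \ge MB^i/2$, so it suffices to show $\HH(c|_S) = \Omega(MB^i \lg C)$.

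The key substep is to show that $c = \gamma \circ f_i$ is uniformly distributed over the set of balanced $C$-colorings of $[MB^i]$, i.e., colorings in which each color appears exactly $MB^i/C$ times. This follows by symmetry: the uniform distribution on balanced $f_i: [MB^i] \to [MB^{i-1}]$ is invariant under the action $f_i \mapsto f_i \circ \pi$ for any permutation $\pi$ of $[MB^i]$, and under this action $c$ transforms as $c \mapsto c \circ \pi$. Since the symmetric group acts transitively on balanced $C$-colorings of $[MB^i]$, every such coloring is equally likely. (Equivalently, one checks by direct counting that the number of balanced $f_i$'s compatible with any given balanced $c'$ equals $[(MB^i/C)!/(B!)^{MB^{i-1}/C}]^C$, which is independent of $c'$.) Therefore $\HH(c) = \lg\binom{MB^i}{MB^i/C,\ldots,MB^i/C}$, and by Stirling this equals $MB^i \lg C - O(C \lg MB^i)$.

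The last step is to restrict $c$ to $S$ using the chain rule
\[
\HH(c) \le \HH(c|_S) + \HH(c|_{S^c}) \le \HH(c|_S) + |S^c|\,\lg C,
\]
so $\HH(c|_S) \ge \HH(c) - (MB^i/2)\,\lg C \ge \tfrac{1}{2} MB^i \lg C - O(C \lg MB^i)$. With the paper's parameters $M = n^{1-\eps}$, $B = \lg^2 n$, $C = n^\eps$, the Stirling correction is negligible: $MB^i \lg C \ge \eps n^{1-\eps} \lg n$ while $C \lg MB^i = O(n^\eps \lg n)$, so the ratio is $\Omega(\eps n^{1-2\eps})$, which diverges under the hypothesis $\eps = \omega(\sqrt{\lg\lg n / \lg n})$ implicit in the theorem. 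Thus $\HH(\calC(Q^\star)) = \HH(c|_S) = \Omega(MB^i \lg C)$.

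The main obstacle is justifying the uniformity of $\gamma \circ f_i$ over balanced $C$-colorings; once that is in hand, everything reduces to standard entropy book-keeping combined with Stirling's formula and the chain rule.
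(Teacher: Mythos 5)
Your proof is correct and follows essentially the same route as the paper's: compute the entropy of the random balanced coloring of level $i$ (your $c$, the paper's $X$) via the multinomial coefficient, then pass from the full coloring to its restriction on $S=f_{>i}(Q^\star)$ by subtracting the entropy of the unqueried half. The only differences are cosmetic: you spell out the symmetric-group transitivity argument for uniformity (which the paper asserts in one sentence), and you invoke Stirling for the multinomial bound where the paper instead uses the cruder observation that the central multinomial coefficient is at least a $n^{-C}$ fraction of $C^{MB^i}$; both yield the same $MB^i\lg C - O(C\lg n)$ up to constants.
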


\begin{proof}
Let $X$ be the unknown coloring of all vertices on level $i$. 
We claim it has entropy 
$\HH(X) = MB^i \cdot \log_2 C - O(C\lg
n)$. We have not
fixed anything impacting this coloring so $X$ is a random balanced vector
from $[C]^{MB^i}$.  Indeed, any balanced coloring is equiprobable, because the
coloring of the roots is balanced, all trees have the same sizes, and
$f_i$ is a random balanced function.  
We claim that it has entropy 
$\HH(X) = MB^i \cdot \log_2 C - O(C\lg
n)$.
The number of balanced colorings
is given by the multinomial coefficient $\binom{MB^i}{MB^i/C, ~MB^i/C,
  ~\dots}$. This is the central multinomial coefficient, so it is the
largest. It must therefore be at least a fraction $(MB^i)^{-C} \ge
n^{-C}$ of the sum of all multinomial coefficients. This sum is
$C^{MB^i}$ (the total number of possible colorings), so $\HH(X) \ge
\log_2 (C^{MB^i} / n^C)=MB^i\log_2 C-C\log_2 n$.

We argue that $\HH(\calC(Q^\star)) =\Omega(MB^i \lg C)$.
Indeed, $\calC(Q^\star)$ reveals the coloring of vertices
$f_{<i}(Q^\star)$ on level $i$, which number at least $\frac{1}{2}
MB^i$. Given $\calC(Q^\star)$, to encoding $X$, we just write all 
other colors explicitly using $\frac{1}{2} MB^i \log_2 C$ bits.
Therefore $\HH(\calC(Q^\star)) \geq \HH(X)-\frac{1}{2} MB^i \log_2 C\geq
MB^i \log_2 C -C\lg_2 n -\frac{1}{2}MB^i \log_2 C =\Omega(MB^i \lg C)$.
\end{proof}

We consider the communication game in which Alice represents the time
of epoch $i$ (her private input is $X_A=f_i$), and Bob represents the time
of epochs $i-1, \dots, 1$ and the metaqueries (his private input is
$X_B=\calC(Q^\star)$). Their goal is to determine whether all the
metaqueries return true.

\begin{claim}\label{cl:high-complex}
Any zero-error protocol must have average case bit complexity $\Omega(MB^i \lg C)$.
\end{claim}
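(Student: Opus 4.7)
The plan is to exploit the observation that once all other randomness has been fixed, Bob's input $X_B=\calC(Q^\star)$ is a deterministic function of Alice's input $X_A=f_i$, so the ``yes'' region under the hard distribution $\calD$ is the graph of this function. Together with the entropy bound $\HH(X_B)=\Omega(MB^i\lg C)$ from Claim~\ref{cl:balance-color}, this should force any zero-error transcript to be essentially as long as an encoding of $X_B$.

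First, I would note that after fixing $\{f_j\}_{j\ne i}$, the root colors, and the sequence $(Q^1,\dots,Q^{B^i})$, the coloring $\calC(Q^\star)$ becomes a function $c(f_i)$ of $f_i$ alone (trace each leaf $q\in Q^\star$ up through $f_i$ and the frozen $f_{<i}$ to its root and read off the color). Under $\calD$, every input pair therefore satisfies $X_B = c(X_A)$, and the game answer is always ``true''.

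Next I would invoke the standard rectangle property of deterministic two-party protocols: the preimage of any transcript $t$ is a combinatorial rectangle $\mathcal{A}_t \times \mathcal{B}_t$, on which the protocol's output is constant. Zero-error correctness then forces every accepting rectangle to contain only ``yes'' pairs, i.e., those with $X_B = c(X_A)$. The core combinatorial step is a collapse argument: if $(X_A^1,X_B^1)$ and $(X_A^2,X_B^2)$ both lie in an accepting rectangle, then the cross pair $(X_A^1,X_B^2)$ lies there too and must also be accepted, so $X_B^2 = c(X_A^1) = X_B^1$. Consequently, within each accepting rectangle every ``yes'' pair shares the same $X_B$, and on the support of $\calD$ the transcript $T$ becomes a function that uniquely determines $X_B$.

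Finally, since transcripts of a deterministic protocol form a prefix-free binary code, Kraft's inequality gives $\E_\calD[|T|] \ge \HH(T) \ge \HH(X_B) = \Omega(MB^i \lg C)$ by Claim~\ref{cl:balance-color}, which is exactly the bound being claimed. The main obstacle is the collapse step: one must verify carefully that the graph-of-function structure of the ``yes'' region, and not some weaker property, is what forces $X_B$ to be constant on each accepting rectangle. Once that is established, the remaining information-theoretic bookkeeping is routine.
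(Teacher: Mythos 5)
Your proposal is correct and follows essentially the same route as the paper's proof: both rest on the observation that $X_B$ is a deterministic function of $X_A$ under $\calD$, invoke the rectangle property of deterministic protocols and zero-error correctness to force each accepting rectangle to pin down $X_B$ uniquely, and then translate the resulting constraint $\HH(T)\ge \HH(X_B)$ (via prefix-freeness of transcripts) into the claimed $\Omega(MB^i\lg C)$ bound on expected communication. Your write-up spells out the cross-pair collapse and the Kraft-inequality step a bit more explicitly than the paper, but the argument is the same.
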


\begin{proof}
We turn our attention to the communication game. The set of inputs of
Alice and Bob that lead to a fixed transcript of the communication
protocol forms a combinatorial rectangle. More precisely, a transcript $t$
represents a sequence of transmissions between Alice and Bob. On Alice's side,
there will be a certain set ${\cal X}^t_A$ of inputs making her follow
$t$ provided that Bob follows $t$, and we have a corresponding 
input set ${\cal X}^t_B$ from Bob. Inputs $X_A$ and $X_B$ will lead to $t$
if and only if $(X_A,X_B)\in {\cal X}^t_A\times {\cal X}^t_B$.
Since the players must
verify $X_B=\calC(Q^\star)$ and the protocol has zero error, the rectangle
cannot contain two inputs of Bob with different
$\calC(Q^\star)$, that is, $|{\cal X}^t_B|=1$ for all valid $t$. Thus the
transcript for a coloring $\calC(Q^\star)$ is unique with no
smaller entropy.
\end{proof}

We will use Lemma~\ref{lem:bloom} to obtain a communication
protocol, setting the rate of false positives in the Bloom filter to
$p= 1/\lg n$. The cells written in Alice's interval are precisely
$W_i$; the cells read in Bob's interval are $R_{<i} \cup R^\star$ where
$R^\star$ is the union of the cells read by all the metaqueries.
By Lemma~\ref{lem:bloom}, the communication complexity is:
\begin{eqnarray*}
& & \E\big[ 
|(R_{<i} \cup R^\star) \cap W_i| \cdot O(\lg n)
~+~ W_i \cdot O(\lg\lg n) 
~+~ \tfrac{1}{\lg n} |R_{<i} \cup R^\star|\cdot O(\lg n) \big] \\
&\le&
\E[|R^\star \cap W_i|]\cdot O(\lg n)
~+~ O(M B^i t_u \cdot \lg\lg n) ~+~ O(MB^{i-1} t_u \cdot \lg n) ~+~ O(|R^\star|)
\end{eqnarray*}
We compare this to the lower bound of $\Omega(MB^i \lg C) =
\Omega(MB^i \cdot \eps \lg n)$ from Claim \ref{cl:high-complex}. Remember that $t_u = o(\eps^2 \lg n /
\lg\lg n)$, so the second term is $o(MB^i \eps^2 \lg n)$, which is
asymptotically lower than the lower bound. Also, we set $B = \lg^2 n$,
so the third term is $o(MB^i)$. Finally, we have $|R^\star| = O(B^i
Mt_u)$. To see this, recall 
that $|R^\star|\leq B^i (Mt_u+C t_q)$, so if the statement was false,
we would have $B^i C t_q=\omega(B^iM)$ and $t_q=\omega(M/C)=\omega(n^{1-2\eps})$.
Since $O(B^i Mt_u)$ is also low order term, the first term must dominate, which means
$\E[|R^\star \cap (W_i \setminus W_{<i})|] = \Omega(MB^i
\eps)$. Therefore, $\E[|R^\star \cap (W_i \setminus W_{<i})|] =
\Omega(\eps M)$. This completes the proof of Lemma~\ref{lem:inc-main} from
which we got our lower bound for incremental
connectivity.


\section{Lower Bound for Dynamic Connectivity}

\begin{theorem}
Any data structure for dynamic connectivity in graphs of $n$ vertices
that has (amortized) update time $t_u = o(\lg n)$ must have (amortized) 
query time $t_q \ge
n^{1-o(1)}$.
\end{theorem}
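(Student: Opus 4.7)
The plan is to follow the template of Theorem~\ref{thm:inc} almost verbatim, with two substantive modifications that recover the additional $\lg\lg n$ factor and make the bound robust to amortization and Las Vegas randomization. First, take the branching factor $B$ to be a (sufficiently large) constant instead of $B = \lg^2 n$, so the balanced forest has $\Theta(\eps \lg n)$ levels rather than $\Theta(\eps \lg n / \lg\lg n)$. Second, in place of the Bloom-filter simulation of Lemma~\ref{lem:bloom}, use the nondeterministic Bloomier-filter simulation of Lemma~\ref{lem:bloomier}, which replaces the $\lg(1/p) = \Theta(\lg\lg n)$-bit overhead per cell in $W_A$ with only a constant-bit overhead per cell in $W_A \cup R_B$.

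Concretely, let $\eps = o(1)$ satisfy $t_u = o(\eps \lg n)$. With $M = n^{1-\eps}$, $C = n^\eps$, and $B$ constant, build $M$ balanced $B$-ary trees bottom-up in $\Theta(\eps \lg n)$ epochs, attach the $C$ colored vertices to the roots in the same fixed pattern as in Section~3, and perform a single random colored metaquery. Since the model is fully dynamic, after each experiment delete all inserted edges to return to the empty graph, and then repeat the experiment $\poly(n)$ times; the amortized bound $t_u$ then controls the average cost per update across the whole sequence. Yao's principle lets us fix the coins of any Las Vegas data structure on our hard distribution and work with a deterministic algorithm whose expected running times are bounded.

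The entropy estimate of Claim~\ref{cl:balance-color} carries over unchanged, giving $\HH(\calC(Q^\star)) = \Omega(MB^i \eps \lg n)$, and the associated nondeterministic communication lower bound is still $\Omega(MB^i \eps \lg n)$ because the metaquery remains an AND of Boolean queries. Plugging into Lemma~\ref{lem:bloomier} gives
\[
\E[|R^\star \cap W_i|] \cdot O(\lg n) \;+\; O\big(\E[|W_i|] + \E[|R^\star|]\big) \;=\; \Omega(MB^i \eps \lg n).
\]
The additive terms expand to $O(MB^i t_u) + O(B^i(M t_u + C t_q))$, both of which are $o(MB^i \eps \lg n)$ under $t_u = o(\eps \lg n)$ and (for contradiction) $t_q = o(n^{1-2\eps})$. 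Hence the intersection term must dominate, yielding $\E[|R^Q \cap (W_i \setminus W_{<i})|] = \Omega(\eps M)$ per epoch. Summing over the $\Theta(\eps \lg n)$ disjoint epochs, $\E[|R^Q|] = \Omega(M \eps^2 \lg n)$, and since $|R^Q| \le C t_q + 2M t_u$ with $2M t_u = o(M \lg n)$, we conclude $t_q = \Omega(M \eps^2 \lg n / C) = n^{1-2\eps} = n^{1-o(1)}$.

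The main obstacle is that the clean worst-case bound $|W_i| \le MB^i t_u$ used in Section~3 holds only in expectation once amortization is allowed. I would handle this by running the entire experiment polynomially many times and invoking the amortized guarantee---the total number of cells written across all updates is at most $t_u$ times the number of updates---together with a Markov-type argument to isolate a typical run whose per-epoch write count is within a constant factor of its expectation $MB^i t_u$. A minor technical point is that the ``memory snapshot before epoch $i$'' referenced in the communication game is now the snapshot at the start of the relevant experiment rather than of the whole sequence; this does not affect the validity of Lemma~\ref{lem:bloomier}, which only requires that the snapshot be fixed for Alice and Bob.
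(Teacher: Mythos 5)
The proposal takes a genuinely different route from the paper, and in doing so misses the key obstacle. The paper does not try to lift the epoch (Fredman--Saks) construction from Section~3 to the dynamic setting. Instead it switches entirely to the tree-over-time (P\v{a}tra\c{s}cu--Demaine) framework: a grid graph $[M]\times[n/M]$ with permutations between consecutive columns, an $\proc{Update}$/$\proc{Query}$ pair at each column, the operations scheduled in bit-reversal order, and a perfect binary tree over the time axis. For a node $v$ with children $v_L, v_R$, Lemma~\ref{lem:full-main} lower-bounds $\E[|W(v_L)\cap R(v_R)| + \tfrac{1}{\lg n}|W(v_L)\cup R(v_R)|]$, and these quantities are then summed over \emph{all} tree nodes and compared to the \emph{total} running time $T$ (each cell probe is charged to the lowest common ancestor of its read time and last-write time, so $\sum_v |W(v_L)\cap R(v_R)| \le T$; similarly $\sum_v |W(v_L)\cup R(v_R)| \le T\lg\frac{n}{M}$). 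Because only the total running time appears, amortization is handled automatically. The bit-reversal schedule is what makes the entropy argument go through at every tree node, since the updates in $J_A$ and queries in $J_B$ interleave perfectly.

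The gap in your approach is the Markov step. You assert that repeating the experiment and conditioning on a typical run gives ``per-epoch write count within a constant factor of its expectation $MB^i t_u$.'' But the amortized guarantee only controls $\sum_i |W_i|$ over an entire run (and its expectation over runs); it says nothing about how that budget is distributed across epochs. A perfectly legal amortized data structure can, in \emph{every} run, write essentially nothing during the large leaf epochs and dump $\Theta(n t_u)$ writes during the tiny root epoch $i=1$. Then $\E[|W_1|] = \Theta(n t_u)$, which dwarfs the communication lower bound $\Omega(MB\cdot\eps\lg n) = \Omega(n^{1-\eps}\eps\lg n)$ since $n t_u / (n^{1-\eps}\eps\lg n) = n^\eps t_u/(\eps\lg n) \to \infty$. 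The additive $O(|W_A\cup R_B|)$ term in Lemma~\ref{lem:bloomier} then dominates for epoch $1$, and no amount of repetition changes this because the skew is deterministic, not a tail event. The same issue infects $|R_{<i}|$ and $|R^\star|$: with amortization the per-metaquery read bound $|R^Q|\le Ct_q+2Mt_u$ is also unavailable. Summing the per-epoch inequalities does not rescue the argument either, because the offending additive terms sum to $\Theta(n t_u \cdot \#\text{epochs})$, not $\Theta(n t_u)$. In short, the epoch decomposition gives per-interval quantities with no amortized control; the tree decomposition gives a sum of intersections globally bounded by $T$, which is exactly the amortized quantity. That structural change---not a Markov trick on top of the epoch construction---is what the amortized lower bound requires, and it is the reason the paper's Section~4 looks nothing like Section~3.

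Your other observations are fine and do match the paper's ingredients: the Bloomier-filter simulation (Lemma~\ref{lem:bloomier}) is indeed what replaces the Bloom filter to kill the $\lg\lg n$ loss, and the entropy/rectangle argument for the colorings does carry over. But those are necessary, not sufficient: without the interleaved schedule and tree-over-time charging, the amortization claim does not go through.
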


Let $\eps$ be such that $t_u = o(\eps^2 \lg n)$, and define $M =
n^{1-\eps}$ and $C = n^\eps$.  The shape of our graphs is depicted in
Figure~\ref{fig:graphs}. The vertices are points of a grid $[M] \times
[n/M]$. The edges of our graph are matchings between consecutive
columns. Let $\pi_1, \dots, \pi_{n/M -1}$ be the permutations that
describe these matchings. We let $\pi_{\le j} = \pi_j\circ \pi_{j-1}
\circ \dots \circ \pi_1$. Node $i$ in the first column is connected
in column $j+1$ to $\pi_{\le j}(i)$.

The graph also contains $C$ special vertices, which we imagine are
colored with the colors $1, \dots, C$. At all times, a colored
vertex is connected to a fixed set of $M/C$ vertices in the first
column. (For concreteness, colored vertex $1$ is connected to vertices
$1, \dots, M/C$; colored vertex $2$ to the next $M/C$ vertices; etc.)

\begin{figure*}
  \centering
\begin{tikzpicture}[scale=0.4]
  \FPeval{rx}{3}   \FPeval{ry}{1.5}
  \def\ttmp#1#2{
    \FPeval{yy}{{#1} * \ry}  \FPeval{ny}{{#2} * \ry}
    \draw (\xx, \yy) -- (\nx, \ny);
  }
  \def\tmp#1#2#3#4#5#6#7#8{
    \FPeval{xx}{{#1} * \rx}   \FPeval{nx}{\xx + \rx}
    \FPeval{rxx}{\nx - 0.6}   \FPeval{lxx}{\xx + 0.6}
    \draw[dashed, darkgray] (\lxx, -1) -- (\rxx, -1) 
                         -- (\rxx, 10) -- (\lxx, 10) -- (\lxx, -1);
    \FPeval{midx}{(\rxx + \lxx) / 2}
    \draw (\midx, 10) node[above] {$\pi_{#1}$}; 
    \ttmp{0}{#2}     \ttmp{1}{#3}    \ttmp{2}{#4}    \ttmp{3}{#5}
    \ttmp{4}{#6}     \ttmp{5}{#7}    \ttmp{6}{#8}
  }
  \def\tmpcirc#1#2#3{
    \FPeval{xx}{{#1} * \rx}  \FPeval{yy}{{#2} * \ry}
    \fill[#3] (\xx, \yy) circle (0.4);
    \draw[line width=0.5pt] (\xx, \yy) circle (0.4);
  }
  \tmp{1}{2}{3}{6}{5}{4}{0}{1}
  \tmp{2}{5}{1}{2}{3}{6}{0}{4}
  \tmp{3}{1}{0}{3}{5}{4}{6}{2}
  \tmp{4}{0}{5}{4}{3}{6}{1}{2}
  \tmp{5}{4}{2}{1}{0}{5}{6}{3}
  \tmp{6}{6}{0}{5}{4}{3}{1}{2}
  \foreach \x in {6, 7}
    \foreach \y in {0, ..., 6} {
      \tmpcirc{\x}{\y}{white!80!black}
    }
  \def\newtmp#1#2{
    \FPeval{yy}{{#1} * \ry}  \FPeval{ny}{{#2} * \ry}
    \draw[thick] (\xx, \yy) -- (\nx, \ny);
  }
  \FPeval{xx}{0}   \FPeval{nx}{\xx + \rx}
  \newtmp{2}{0}	\newtmp{2}{1}	\newtmp{2}{2}	\newtmp{3}{3}
  \newtmp{3}{4}	\newtmp{4}{5}	\newtmp{4}{6}
  %
  \foreach \x in {1, ..., 5}
    \foreach \y in {0, ..., 6} {
      \tmpcirc{\x}{\y}{white!80!black}
    }
  \tmpcirc{0}{2}{blue}
  \tmpcirc{0}{3}{yellow}
  \tmpcirc{0}{4}{red}   
\end{tikzpicture}
  \caption{The shape of our graphs.}
  \label{fig:graphs}
\end{figure*}

We will allow two meta-operations on this graph: $\proc{Update}$ and
$\proc{Query}$. Initially, all permutations are the identity (i.e.~all
edges are horizontal). $\proc{Update}(j, \pi_{new})$ reconfigures the
edges between columns $j$ and $j+1$: it sets $\pi_j$ to the
permutation $\pi_{new}$. This entails deleting $M$ edges and inserting
$M$ edges, so $\proc{Update}$ takes time $2M\cdot t_u$.

$\proc{Query}(j, x)$ receives a vector $\chi \in [C]^M$, which it treats
as a proposed coloring for vertices on column $j$. The goal of the
query is to test whether this coloring is consistent with the coloring
of the vertices in the first column. More specifically, a node $i$
of color $a$ in the first column must have
$\calC[\pi_{<j}(i)] = a$. A \proc{Query} can be implemented efficiently by
connectivity operations. First each vertex $i$ in column $j$ is
connected to the colored vertex $\calC[i]$. Then, for $i = 2 \twodots M$,
we run a connectivity query to test whether colored vertex $i$ is
connected to colored vertex $i-1$. If so, \proc{Query} return
false. Otherwise, it inserts an edge between colored vertices $i$ and
$i-1$ and moves to the next $i$. At the end, \proc{Query} deletes all
vertices it had inserted. The total cell-probe complexity of
$\proc{Query}$ is $O(M)\cdot t_u + C \cdot t_q$. It is easy to observe
that this procedure correctly tells whether the colorings are
consistent (as in our instance of incremental connectivity).

We will now describe the hard distribution over problem instances. We
assume $\frac{n}{M}-1$ is a power of two. Let $\sigma$ be the
bit-reversal permutation on $\{0, \dots, \frac{n}{M} - 2\}$:
$\sigma(i)$ is the reversal of $i$, treated as a vector of $\log_2
(\frac{n}{M} - 1)$ bits. For $i= 0, \dots, \frac{n}{M}-1$, we execute
an $\proc{Update}$ to position $j=\sigma(i)+1$, and a $\proc{Query}$
to the same position $j$. The update sets $\pi_j$ to a new random
permutation. The query always receives the consistent coloring, and
should answer true. The total running time is
\[T \leq n/M(2Mt_u+O(M)t_u + C t_q)=O(n t_u + (n/M)C t_q).\]
If we can prove a lower bound $T=\omega(nt_u)$, then this will yield
a high lower bound for $t_q$.

For the lower bound proof, we consider a perfect ordered binary tree 
with $n/M - 1$. The leaves are associated with the pairs of $\proc{Update}$ and
$\proc{Query}$ operations in time order. Let $W(v)$ (respectively
$R(v)$) be the set of cells written (respectively, read) while
executing the operations in the subtree of $v$. Note that $W(v)
\subseteq R(v)$, since we have assumed a cell must be read before it is
written. Our main claim is:

\begin{lemma}  \label{lem:full-main}
Let $v$ be a node with $2k$ leaves in its subtree, and let $v_L, v_R$
be its left and right children. Then $\E[|W(v_L) \cap R(v_R)| +
  \frac{1}{\lg n} |W(v_L) \cup R(v_R)|] = \Omega(k\cdot \eps M)$.
\end{lemma}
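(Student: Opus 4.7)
The plan is to mirror the argument for incremental connectivity: reduce the lemma to a lower bound on a two-player communication game via Lemma~\ref{lem:bloomier}, and then prove that lower bound by an entropy/rectangle argument that exploits the bit-reversal schedule.

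First, I would set up the game. Alice's input is all the operations in the subtree of $v_L$ (in particular the $k$ random permutations $\Pi_L$ she installs); Bob's input is the analogous data for $v_R$. All operations outside the subtree of $v$ are fixed, so the memory snapshot at the start of $v$'s block is determined. The game's output is the AND of the answers of the queries in $v_R$, which is always true under the hard distribution. Applying Lemma~\ref{lem:bloomier} to the two consecutive time intervals yields a nondeterministic protocol of expected bit-complexity
\[\E\bigl[O(\lg n)\cdot |W(v_L)\cap R(v_R)|\;+\;O(|W(v_L)\cup R(v_R)|)\bigr].\]

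Next I would extract the structural consequence of the bit-reversal schedule $\sigma$. The $k$ updates in $v_L$ touch positions whose ``middle'' bit (the bit distinguishing $v_L$ from $v_R$) is $0$, while the $k$ operations in $v_R$ touch positions where that bit is $1$. Thus the two sets of edited positions are disjoint, every $v_L$-permutation remains live throughout $v_R$, and for each $v_R$-query at position $j$ the product $\pi_{<j}$ involves a nontrivial prefix of $v_L$-permutations. Ordering the $k$ queries in $v_R$ by their position, the $q$-th one has $\Theta(q)$ live $v_L$-permutations strictly below it, and at least one additional $v_L$-permutation compared to the $(q-1)$-st.

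The core of the proof is a communication lower bound of $\Omega(kM\lg C)=\Omega(k\eps M\lg n)$, obtained by combining the rectangle argument of Claim~\ref{cl:high-complex} with the balanced-coloring entropy bound of Claim~\ref{cl:balance-color}. By correctness, the transcript together with Bob's input $\Pi_R$ must determine the vector $(\chi_1,\ldots,\chi_k)$ of consistent colorings received by $v_R$'s queries. Conditioned on $\Pi_R$ and on $\chi_1,\ldots,\chi_{q-1}$, the coloring $\chi_q$ is still a uniform balanced coloring in $[C]^M$, because the fresh $v_L$-permutation identified in the previous step randomizes the map from column $1$ to column $j_q$; hence $\HH(\chi_q\mid\chi_1,\ldots,\chi_{q-1},\Pi_R)=\Omega(M\lg C)$ by Claim~\ref{cl:balance-color}. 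Summing over $q$ and then dividing the resulting bound on the average transcript length by $\lg n$ yields the lemma. The main obstacle I anticipate is precisely this per-query entropy step: verifying carefully, from the bit-reversal structure, that each new query really reveals $\Omega(M\lg C)$ fresh bits about $\Pi_L$ beyond what $\Pi_R$ and the earlier colorings already determine, and that the rectangle-style argument of Claim~\ref{cl:high-complex} extends cleanly to the nondeterministic setting so that the entropy lower bound translates into a transcript-length lower bound.
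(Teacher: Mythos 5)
Your proposal is correct and follows essentially the same route as the paper: you reduce to the nondeterministic communication game via Lemma~\ref{lem:bloomier}, use the bit-reversal interleaving of $J_A$ and $J_B$ to insert a fresh $v_L$-permutation between consecutive $v_R$-query positions, and conclude a $\Omega(kM\lg C)$ entropy lower bound that matches the protocol cost. The only cosmetic differences are that the paper fixes the updates outside $v_L$ outright (so Bob's input is just the colorings, and the colorings become mutually independent across $J_B$), whereas you keep $\Pi_R$ as part of Bob's input and condition on it in a chain-rule version of the same entropy calculation; and the paper dispatches your flagged concern about the nondeterministic rectangle by the standard observation that each accepted proof $Z$ fixes a monochromatic rectangle, so together with $\Pi_R$ it pins down the unique consistent coloring vector, making the proof an encoding of $X_B$.
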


Before we prove the lemma, we use it to derive the desired lower
bound. We claim that the total expected running time is $T \ge \sum_v
\E[|W(v_L) \cap R(v_R)|]$, where the sum is over all nodes in our
lower bound tree. Consider how a fixed instance is executed by the
data structure. We will charge each read operation to a node in the
tree: the lowest common ancestor of the time when the instruction
executes, and the time when the cell was last written. Thus, each
$W(v_L) \cap R(v_R)$ corresponds to (at least) one read instruction,
so there is no double-counting in the sum.

We now sum the lower bound of Lemma~\ref{lem:full-main} over all
nodes; observe that $\sum_v k_v = \Theta(\frac{n}{M} \lg
\frac{n}{M})$, since the tree has $n/M - 1$ leaves.  We obtain $\sum_v
\E[|W(v_L) \cap R(v_R)|] + \frac{1}{\lg n} \sum_v \E[|W(v_L) \cup
  R(v_R)|] = \Omega(\frac{n}{M} \lg \frac{n}{M} \cdot \eps M)$. The
first term is at most $T$, as explained above. In the second term is
also bounded by $T$. This is because 
$\sum_v \E[|W(v_L) \cup R(v_R)|] \le T \lg \frac{n}{M}$ since every
cell probe is counted once for every ancestor of the time it
executes. Thus $2T \geq \Omega(\frac{n}{M} \lg \frac{n}{M} \cdot \eps M) =
\Omega(\eps^2 n \lg n)$. In our construction, the total running time
was $T = O(n t_u + \frac{n}{M} C t_q)$. Since $t_u = o(\eps^2 \lg n)$,
the second term must dominate: $\frac{nC}{M} t_q = \Omega(\eps^2 n \lg
n)$, so $t_q > M/C = n^{1-2\eps} = n^{1-o(1)}$.

\paragraph{Proof of Lemma~\ref{lem:full-main}.}
We will prove the stronger statement that the lower bound holds no
matter how we fix the updates outside node $v_L$.

We transform the problem into the natural communication game: Alice
receives the update permutations in the subtree $v_L$ and Bob receives the colorings of the queries in
the subtree $v_R$ (the updates are fixed).  They have to check whether
all queries are positive in the sequence of \proc{Update} and
\proc{Query} operations defined by their joint input.

We apply Lemma~\ref{lem:bloomier} to construct a nondeterministic
communication protocol for this problem, with complexity $\E[ |W(v_L)
  \cap R(v_R)| \cdot O(\lg n) + O(|W(v_L) \cup R(v_R)|)]$. The
conclusion of Lemma~\ref{lem:full-main} follows by comparing this protocol to the
following communication lower bound:

\begin{lemma}
The game above has nondeterministic (average-case) communication
complexity $\Omega(k M \lg C)$.
\end{lemma}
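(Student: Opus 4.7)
The plan is to show that any correct nondeterministic protocol must in effect encode Bob's input $X_B$, and then lower bound $\HH(X_B)$ by an interleaving-plus-entropy argument. Under the hard distribution every instance is ``true,'' so an accepting rectangle $\calX_A(Z)\times \calX_B(Z)$ may only contain pairs $(X_A,X_B)$ where $X_B$ is consistent with $X_A$. Each $X_A$ admits a \emph{unique} consistent coloring (the prefix composition $\pi_{<p}$ applied to the column-$1$ coloring, at each $v_R$-query position $p$), so no useful rectangle contains two distinct $X_B$'s: $|\calX_B(Z)|\le 1$. Hence $X_B$ is a deterministic function of $Z$, and any prefix-free encoding of $Z$ has expected length at least $\HH(Z)\ge \HH(X_B)$. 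It therefore suffices to show $\HH(X_B)=\Omega(kM\lg C)$.

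Next I would unpack the bit-reversal structure. Inside the subtree of $v$, the $2k$ graph positions touched by the $2k$ operations form an arithmetic progression of length $2k$ and spacing $\Theta(n/(Mk))$, with $v_L$'s and $v_R$'s positions strictly interleaving, $p^L_1<p^R_1<p^L_2<p^R_2<\cdots<p^L_k<p^R_k$. Consequently, for the $i$-th $v_R$-query at position $p^R_i$, the effective prefix composition factors as
\[
\pi_{<p^R_i}\;=\;\alpha_i\,\circ\,\pi^L_i\,\circ\,\gamma_{i,i-1}\,\circ\,\pi^L_{i-1}\,\circ\cdots\circ\,\pi^L_1\,\circ\,\gamma_{i,0},
\]
where $\alpha_i$ and the $\gamma_{i,j}$ depend only on permutations outside $v_L$ together with $v_R$'s (fixed) internal scheduling. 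Crucially, $\pi^L_i$ enters the composition for query $i$ but not for any earlier query.

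Then I compute the entropy. Bob's $i$-th correct coloring is $\calC\circ \pi_{<p^R_i}^{-1}$, a permutation of the fixed balanced column-$1$ coloring. Conditioned on $\pi^L_1,\ldots,\pi^L_{i-1}$, Alice's fresh uniform $\pi^L_i$ makes $\pi_{<p^R_i}$ uniform on $S_M$ (left/right multiplication by a fixed permutation is a bijection of $S_M$), so the $i$-th coloring is uniform over balanced colorings of $[M]$ into $C$ classes of size $M/C$. Since this conditional distribution does not depend on the conditioning, the $k$ colorings are mutually independent. Using the same central-multinomial bound as in Claim~\ref{cl:balance-color}, each has entropy at least $M\lg C-O(C\lg n)$, giving $\HH(X_B)\ge k\bigl(M\lg C-O(C\lg n)\bigr)=\Omega(kM\lg C)$ in the regime $M=n^{1-\eps}$, $C=n^\eps$.

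The main obstacle I anticipate is the interleaving step: one must verify that the bit-reversal ordering places each $v_L$-position strictly below a corresponding $v_R$-position, and that each graph position is written at most once within the window of $v$, so that Alice's $k$ permutations truly inject $k$ independent units of $\Omega(M\lg C)$ entropy into Bob's vector of colorings. Without it, all of $v_R$'s queries could depend on the same small subset of Alice's permutations and the bound would collapse. Once the interleaving is in hand, the rectangle argument in the first paragraph and the multinomial entropy estimate in the third are essentially routine.
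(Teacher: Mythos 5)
Your proposal is correct and follows the same strategy as the paper's proof: the rectangle argument that an accepting proof $Z$ determines $X_B$ uniquely, the reduction to bounding $\HH(X_B)$, the bit-reversal interleaving that injects one fresh uniform permutation of Alice's into each of Bob's $k$ query positions, and the central-multinomial entropy estimate from Claim~\ref{cl:balance-color}. If anything, your chain-rule justification of why the $k$ colorings are mutually independent (the conditional distribution of the $i$-th coloring given the earlier ones is uniform over balanced colorings, independent of the conditioning) is spelled out more explicitly than the paper's brief assertion.
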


\begin{proof}
Let $X_A$ and $X_B$ be the inputs of the two players. For any choice
of $X_A$, there is a unique sequence of colorings $X_B$ that Bob
should accept. As in the proof of Lemma \ref{cl:high-complex}, we conclude that the public 
proof is an encoding of $X_B$ so we can lower bound the complexity 
via $\HH(X_B)$.

Let $J_A$ and $J_B$ be the columns touched (updated and queried) in
Alice's input and in Bob's input. Bob's input consists of the coloring
of column $j$, for each $j\in J_B$. This is $\pi_{<j}$ applied to the
fixed coloring in the first column. 

Since $J_A$ and $J_B$ are defined by the bit-reversal permutation, we
know that they interleave perfectly: between every two values in the
sorted order of $J_B$, there is a unique value in $J_A$. Thus, the
coloring for different $j\in J_B$ are independent random variables, 
since an independent uniform permutation from $J_A$ is composed into $\pi_{<j}$
compared to all indices from $J_B$ below $j$. Each coloring is uniformly
distributed among balanced colorings, so it has entropy $M\lg C -
O(C\lg M)$ (c.f. proof of Claim~\ref{cl:balance-color}). 
We conclude that $\HH(X_B) =
\Omega(kM\lg C)$.
\end{proof}

\section{Amortized link-find bounds}
In this section we consider the amortized complexity of the link-find
problem which is like the union-find problem except that we can
link arbitrary nodes, not just roots. In link-find, we may not
necessarily have an obvious notion of a root that we can find. 
The fundamental requirement to a component is that if we call
find from any vertex in it, we get the same root as long as the
component is not linked with other components.

Let $u$ be the number of
updates and $q$ the number of queries. With union-find,
the complexity over the whole sequence is $\Theta(\alpha(q,u)q)$ if $q\geq u$,
and $\Theta(\alpha(q,q)q+u)$ if $q\leq u$. With link-find, we get
the same complexity when $q\geq u$, but a higher complexity of 
$\Theta(\alpha(q,u)u)$ when $q\leq u$. Thus, with link-find,
we get a symmetric formula in $q$ and $u$ of 
\begin{equation}\label{eq:link-find}
\Theta(\alpha(\max\{q,u\},\min\{q,u\})\max\{q,u\}).
\end{equation}
We get the upper-bound in (\ref{eq:link-find}) via a very simple reduction to 
union-find.  
\subsection{The link-find data structure}
Nodes have three types: free, leaf, and
union nodes. A leaf node has a pointer to a neighboring union node,
and the union nodes will participate in a standard union-find data structure.
The parent of a leaf is the union node it points to. The parent of a union node
is as in the union-find structure and the parent of a root is the root itself.

All nodes start as free nodes. We preserve the invariant that
if a component has a free node, then all nodes in the component are
free.

To perform a find on a free node $v$, we scan the component of $v$.
If it is a singleton, we just return it. Otherwise, assuming some
initial tie-breaking order, we make the smallest node in the component
a union node and all other nodes leaf nodes pointing to is. The union
node which is its own root is returned.  All this is paid for by the
nodes that lost their freedom.

To perform a find on a non-free node, we perform it on the parent which
is in the union-find data structure.

We now consider the different types of links. When we perform link between 
two free nodes, nothing happens except that an edge is added in constant time.

If we link a free node $v$ with a non-free node $w$, we make all nodes
in the components of $v$ leaves pointing to the parent of $w$. This
is paid for by the new leaves.

If we link two non-free nodes, we first perform a find from their parents
which are union nodes. If they have different roots we unite them.

This completes the description of our link-find data structure which
spends linear time reducing to a union-find data structure.
A union node requires a find on a non-singleton node, so the
number of union nodes is at most  $\min\{q,u\}$. Concerning finds in
the union-find data structure, we get one for each original find on
a non-free node. In addition, we get two finds for each link of
two non-free nodes, adding up to at most $q+2u$ finds. Our total
complexity is therefore
\[O(u+q+\alpha(q+2u,\min\{q,u\})(q+2u))=O(\alpha(\max\{q,u\},\min\{q,u\})
\max\{q,u\}).\]
We are going to present a matching lower bound.
\subsection{The link-find data structure for a forest}
We will now show that it is the links between nodes in the
same components that makes link-find harder than union-find
in the sense that if no such links appear, we get the
same $O$-bound as with union-find. 

The modification to the above link-find reduction is
simple. Using standard doubling ideas, we can assume that
$u$ and $q$ are known in advance. If $q\geq u$, we are already
matching the union-find bound, so assume $q\leq u$.

To do a find on a free node, we again scan its component. However,
if it has less than $\alpha(q,q)$ nodes, we just return the smallest 
but leaving the component free. Otherwise, as before, we make the smallest
node a union node and all other nodes leaf nodes pointing to it. This
is the only change to our link-find algorithm.

In the case where the component has $\alpha(q,q)$ nodes, we clearly pay
only $O(\alpha(q,q))$ for a find. The advantage is that we now create
at most $u/\alpha(q,q)$ union nodes. Links involving a free node
have linear total cost, and now, when we perform
a link of non-free nodes, we know they are from different components
to be united, so this will reduce the number of union roots by one. 
Hence we get at most $2u/\alpha(q,q)$ finds resulting from these links.
Thus, in the union-find data structure, we end up with $q+2u/\alpha(q,q)$ 
finds and $u/\alpha(q,q)$ unions. The total cost is
\[O(u+q+\alpha(q+2u/\alpha(q,q),u/\alpha(q,q))(q+2u/\alpha(q,q))=
O(\alpha(q,q)q+n)\]
time. The simplification uses that $\alpha$ is increasing in its first
and decreasing in its second argument, and that the whole time
bound is linear if $q\leq u/\alpha(q,q)$.

{
\bibliographystyle{alpha} 
\bibliography{../../general}

\newcommand{\etalchar}[1]{$^{#1}$}
\begin{thebibliography}{ARG{\etalchar{+}}05}

\bibitem[ABAR99]{alstrup99uf}
Stephen Alstrup, Amir~M. Ben-Amram, and Theis Rauhe.
\newblock Worst-case and amortised optimality in union-find.
\newblock In {\em Proc. 31st ACM Symposium on Theory of Computing (STOC)},
  pages 499--506, 1999.

\bibitem[ARG{\etalchar{+}}05]{alstrup05union}
Stephen Alstrup, Theis Rauhe, Inge~Li G{\o}rtz, Mikkel Thorup, and Uri Zwick.
\newblock Union-find with constant time deletions.
\newblock In {\em Proc. 32nd International Colloquium on Automata, Languages
  and Programming (ICALP)}, pages 78--89, 2005.

\bibitem[BAG01]{benamram01sums}
Amir~M. Ben-Amram and Zvi Galil.
\newblock A generalization of a lower bound technique due to {F}redman and
  {S}aks.
\newblock {\em Algorithmica}, 30(1):34--66, 2001.
\newblock See also FOCS'91.

\bibitem[Ban80]{banachowski80uf}
Lech Banachowski.
\newblock A complement to tarjan's result about the lower bound on the
  complexity of the set union problem.
\newblock {\em Information Processing Letters}, 11(2):59--65, 1980.

\bibitem[Blo70]{bloom70filter}
Burton~H. Bloom.
\newblock Space\slash time trade-offs in hash coding with allowable errors.
\newblock {\em Communications of the ACM}, 13(7):422--426, 1970.

\bibitem[Blu86]{blum86uf}
Norbert Blum.
\newblock On the single-operation worst-case time complexity of the disjoint
  set union problem.
\newblock {\em SIAM Journal on Computing}, 15(4):1021--1024, 1986.
\newblock See also STACS'85.

\bibitem[DP08]{dietzfel08retrieval}
Martin Dietzfelbinger and Rasmus Pagh.
\newblock Succinct data structures for retrieval and approximate membership.
\newblock In {\em Proc. 35th International Colloquium on Automata, Languages
  and Programming (ICALP)}, pages 385--396, 2008.

\bibitem[FS89]{fredman89cellprobe}
Michael~L. Fredman and Michael~E. Saks.
\newblock The cell probe complexity of dynamic data structures.
\newblock In {\em Proc. 21st ACM Symposium on Theory of Computing (STOC)},
  pages 345--354, 1989.

\bibitem[KST02a]{kaplan05meldable}
Haim Kaplan, Nira Shafrir, and Robert~Endre Tarjan.
\newblock Meldable heaps and boolean union-find.
\newblock In {\em Proc. 34th ACM Symposium on Theory of Computing (STOC)},
  pages 573--582, 2002.

\bibitem[KST02b]{kaplan02union}
Haim Kaplan, Nira Shafrir, and Robert~Endre Tarjan.
\newblock Union-find with deletions.
\newblock In {\em Proc. 13th ACM/SIAM Symposium on Discrete Algorithms (SODA)},
  pages 19--28, 2002.

\bibitem[PD06]{patrascu06loglb}
Mihai P\v{a}tra\c{s}cu and Erik~D. Demaine.
\newblock Logarithmic lower bounds in the cell-probe model.
\newblock {\em SIAM Journal on Computing}, 35(4):932--963, 2006.
\newblock See also SODA'04 and STOC'04.

\bibitem[Pou90]{lapoutre90uf}
Johannes A.~La Poutr{\'e}.
\newblock New techniques for the union-find problems.
\newblock In {\em Proc. 1st ACM/SIAM Symposium on Discrete Algorithms (SODA)},
  pages 54--63, 1990.

\bibitem[Pou96]{lapoutre96uf}
Johannes A.~La Poutr\'{e}.
\newblock Lower bounds for the union-find and the split-find problem on pointer
  machines.
\newblock {\em Journal of Computer and System Sciences}, 52(1):87--88, 1996.
\newblock See also STOC'90.

\bibitem[Smi90]{smid90uf}
Michiel H.~M. Smid.
\newblock A data structure for the union-find problem having good
  single-operation complexity.
\newblock ALCOM: Algorithms Review, Newsletter of the ESPRIT II Basic Research
  Actions Program, 1990.

\bibitem[Tar75]{tarjan75uf}
Robert~Endre Tarjan.
\newblock Efficiency of a good but not linear set union algorithm.
\newblock {\em Journal of the ACM}, 22(2):215--225, 1975.

\bibitem[Tar79]{tarjan79uf}
Robert~Endre Tarjan.
\newblock A class of algorithms which require nonlinear time to maintain
  disjoint sets.
\newblock {\em Journal of Computer and System Sciences}, 18(2):110--127, 1979.
\newblock See also STOC'77.

\bibitem[Tho00]{thorup00connect}
Mikkel Thorup.
\newblock Near-optimal fully-dynamic graph connectivity.
\newblock In {\em Proc. 32nd ACM Symposium on Theory of Computing (STOC)},
  pages 343--350, 2000.

\bibitem[TvL84]{tarjan84uf}
Robert~Endre Tarjan and Jan van Leeuwen.
\newblock Worst-case analysis of set union algorithms.
\newblock {\em Journal of the ACM}, 31(2):245--281, 1984.

\end{thebibliography}
}
\clearpage

\appendix

\section*{Appendix $\alpha$.\quad Lower Bounds for Amortized Link--Find}

We will now sketch a proof for the lower-bound in (\ref{eq:link-find}) with
$u$ link updates and $q$ find queries.
When $q\geq u$, we get this from the union-find lower bound of
$\Omega(\alpha(q,u)q)$ from \cite{fredman89cellprobe}. However, for
$q\ll u$, we need to prove a higher lower-bound than that for
union-find. The lower bound we want in this case is
$\Omega(\alpha(u,q)u)$.  

We would get the desired lower bound if we could code
a union-find problem with $\Omega(q)$ updates and $\Omega(u)$ queries.
We cannot make such a black-box reduction, but we can do
it inside the proof construction from \cite{fredman89cellprobe}.
We will only present the idea in the ``reduction''. For a real proof
one has to carefully examine the whole proof from \cite{fredman89cellprobe}
to verify that nothing really breaks. 

The lower bound construction from \cite{fredman89cellprobe} proceeds in
rounds. We start with singleton roots. In a union round, we
pair all current roots randomly, thus halving the number of roots.
In a find round, we perform a number of finds on random leaves. The
number of finds are adjusted depending on the actions of the data structure.
From \cite{kaplan05meldable} we know that the lower bound also holds if the
finds just have to verify the current root of a node.

In our case, we will start with $n$ roots. In a union-round,
we just link roots as in union-find. However, in a find round, instead
of calling find from a leaf $v$, we link $v$ to its current root $r$.
We want to turn this leaf-root link into a verification. We will not
do that for the individual links, but we will do it for the find-round
as a whole (one needs to verify that this batching preserves the 
lower-bound). At the end of the find-round, we simply perform a find
on each root. All these finds should return the root itself. 
If one of the links $(v,r)$ had gone to the wrong root and $r'$ was the 
correct root, then $r$ and $r'$ would be connected in the same tree, which means
that they cannot both be roots. One of the finds would therefore return
a different root. If the union-find problem
we code used $f$ finds, then our link-find solution ends up
with $u=n-1+f$ link updates and $q=n-1$ find verifications, hence
with the desired lower bound of
\[\Omega(\alpha(f,n)f)=\Omega(\alpha(u,q)u).\]

\end{document}